\newtheorem{theorem}{Theorem}
\newtheorem{claim}{Claim}
\newtheorem{lemma}[theorem]{Lemma}
\theoremstyle{definition}
\newtheorem{observation}[theorem]{Observation}
\newcommand{\RR}{\mathbb R}
\newcommand{\KKn}{\mathcal K}
\newcommand{\KK}{\KKn^+}
\newcommand{\ee}{\varepsilon}
\newcommand{\dist}{\textrm{dist}}
\newcommand{\ball}[2]{B_{#2}(#1)}
\newcommand{\tool}{\mathcal D}
\DeclareMathOperator{\Arg}{Arg}
\DeclareMathOperator{\Int}{Int}
\DeclareMathOperator{\Ext}{Ext}
\DeclareMathOperator{\rad}{radius}
\begin{document}


\title{Disks in Curves of Bounded Convex Curvature}

\markright{Disks in Curves of Bounded Convex Curvature}
\interfootnotelinepenalty=10000000
\author{Anders Aamand\footnote{Basic Algorithms Research Copenhagen (BARC), University of Copenhagen. BARC is supported by the VILLUM Foundation grant 16582. Emails: \href{mailto:aa@di.ku.dk}{\texttt{aa@di.ku.dk}}, \href{mailto:miab@di.ku.dk}{\texttt{miab@di.ku.dk}}, \href{mailto:mikkel2thorup@gmail.com}{\texttt{mikkel2thorup@gmail.com}}.}, Mikkel Abrahamsen\footnotemark[1], and Mikkel Thorup\footnotemark[1]}

\date{August 30, 2019}

\maketitle

\begin{abstract}
We say that a simple, closed curve $\gamma$ in the plane has bounded convex curvature if for every point $x$ on $\gamma$, there is an open unit disk $U_x$ and $\ee_x>0$ such that $x\in\partial U_x$ and $\ball{x}{\ee_x}\cap U_x\subset\Int\gamma$.
We prove that the interior of every curve of bounded convex curvature contains an open unit disk.
\end{abstract}

\begin{figure}[h]
\centering
\includegraphics{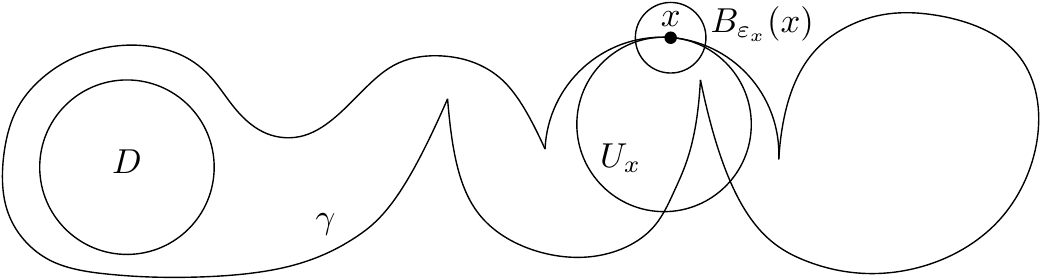}
\caption{A curve $\gamma$ of bounded convex curvature together with a unit disk $D$ in its interior, the existence of which is guaranteed by Theorem~\ref{MAINTHM}.}
\label{bccFig}
\end{figure}

\section{Introduction}

Consider a \emph{Jordan curve} $\gamma$, that is, a simple, closed curve in the plane. We will denote by $\Int \gamma$ and $\Ext \gamma$, respectively, the interior and exterior of $\gamma$.
We say that $\gamma$ has \emph{bounded convex curvature} if for every point $x$ on $\gamma$, there is an open unit disk $U_x$ and $\ee_x>0$ such that
\begin{align}
x\in\partial U_x\quad\text{and}\quad \ball{x}{\ee_x}\cap U_x\subset\Int\gamma. \label{bccCond} 
\end{align}
Here $\ball{x}{\ee}$ is the open disk with center $x$ and radius $\ee$. 
Similarly, we say that $\gamma$ has \emph{bounded concave curvature} if for every point $x$ on $\gamma$, there is an open unit disk $V_x$ and $\ee_x>0$ such that
\begin{align}
x\in\partial V_x\quad\text{and}\quad \ball{x}{\ee_x}\cap V_x\subset\Ext\gamma. \label{bccCond2}
\end{align}
Finally we say that a curve has \emph{bounded curvature} if it has both bounded convex and concave curvature. Curves of bounded convex curvature are the focus of this article.
When we say that $\gamma$ is a curve of bounded convex curvature it will always be understood that $\gamma$ is a Jordan curve.
Figure~\ref{bccFig} shows an example of a curve of bounded convex curvature.
Note that there may be points on a curve of bounded convex (or concave) curvature where the tangent to the curve is not defined.
Our main goal is to prove the following theorem (generalizing a theorem by Pestov and Ionin~\cite{pestov1959largest} that we shall discuss later):

\begin{theorem}\label{MAINTHM}
The interior of any curve of bounded convex curvature contains an open unit disk.
\end{theorem}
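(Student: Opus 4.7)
I would argue by contradiction: suppose $\Int\gamma$ contains no open unit disk. Let $\rho\colon\overline{\Int\gamma}\to\RR$ be defined by $\rho(p)=\dist(p,\gamma)$. Since $\gamma$ is a compact Jordan curve, $\overline{\Int\gamma}$ is compact, so the continuous function $\rho$ attains its maximum $r^*=\rho(p^*)>0$ at some $p^*\in\Int\gamma$. The open disk $\ball{p^*}{r^*}$ is disjoint from $\gamma$ and hence contained in $\Int\gamma$; by our contradiction hypothesis, $r^*<1$, and by the definition of $r^*$ we must have $\partial\ball{p^*}{r^*}\cap\gamma\neq\emptyset$. Fix $x$ in this intersection.

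The bounded convex curvature at $x$ yields an open unit disk $U_x$ with $x\in\partial U_x$ and $U_x\cap\ball{x}{\ee_x}\subseteq\Int\gamma$. Since $r^*<1$, the disk $U_x$ is strictly larger than $D^*:=\overline{\ball{p^*}{r^*}}$, and both are tangent to $\gamma$ at $x$ from the interior side. The plan is to interpolate between $D^*$ and $\overline{U_x}$ by a one-parameter family of closed disks $D_s$, $s\in[0,1]$, obtained by linearly interpolating the centers from $p^*$ to the center of $U_x$ and the radii from $r^*$ to $1$, so that $D_0=D^*$ and $D_1=\overline{U_x}$. Let $s^*=\sup\{s\in[0,1]:D_s\subseteq\overline{\Int\gamma}\}$; by closedness of $\overline{\Int\gamma}$, the supremum is attained. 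If $s^*=1$ then $\overline{U_x}\subseteq\overline{\Int\gamma}$, so the open unit disk $U_x$ must lie entirely in $\Int\gamma$ (since $\gamma$ cannot pass through an open subset of $\overline{\Int\gamma}$), contradicting our assumption. If $s^*\in(0,1)$, then $D_{s^*}$ is a closed disk of radius $r^*+s^*(1-r^*)>r^*$ contained in $\overline{\Int\gamma}$, contradicting the maximality of $r^*$. So it remains only to rule out $s^*=0$, i.e., the case in which the deformation immediately leaves $\overline{\Int\gamma}$.

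\paragraph{Main obstacle.} The case $s^*=0$ arises precisely when $D^*$ touches $\gamma$ at some second point $y\neq x$ and the chosen deformation pushes the boundary of $D^*$ into $\Ext\gamma$ near $y$. To handle this, I would exploit the bounded convex curvature at \emph{every} point of $\partial D^*\cap\gamma$ simultaneously: the osculating unit disks at these contact points should jointly carve out an admissible direction in which $D^*$ can be enlarged. The technical heart of the proof will therefore be showing that a deformation of $D^*$ that is compatible with the local geometry at all of its contact points always exists — or, failing that, that the obstruction to finding one already forces $\Int\gamma$ to contain a unit disk via a topological argument using the Jordan property of $\gamma$. A further subtlety is that $\gamma$ need not be smooth (tangents and the direction of $U_x$'s center from $x$ are not uniquely determined), so the joint-compatibility analysis has to be robust under possible corners and cusps of $\gamma$; this is where I expect the genuine difficulty to lie.
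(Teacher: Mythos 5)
Your argument correctly disposes of the cases $s^*\in(0,1]$, but the case $s^*=0$ is not a corner case to be patched at the end --- it is the only case that matters.  Once $\Int\gamma$ contains no unit disk, the maximal inscribed disk $\ball{p^*}{r^*}$ generically touches $\gamma$ at several points (if it touched at only one point one could translate and grow it, so some configuration of two or more contacts is exactly what stabilizes $r^*$), and as soon as there is a second contact the linear interpolation toward $U_x$ immediately leaves $\overline{\Int\gamma}$.  You acknowledge this yourself and propose to resolve it by showing that ``the osculating unit disks at these contact points jointly carve out an admissible direction''; but that \emph{is} the theorem, in disguise, and no argument for it is given.  It is far from clear that an admissible infinitesimal deformation exists at all: the contact set may not lie in an open half-plane seen from $p^*$, in which case no translation enlarges the disk, and a nontrivial topological argument (which you gesture at but do not supply) would be needed to turn that obstruction into a unit disk inside $\gamma$.

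There is also a secondary gap earlier in the setup.  You write that $D^*$ and $\overline{U_x}$ are ``both tangent to $\gamma$ at $x$ from the interior side'' and interpolate their centers linearly.  Since $\gamma$ is allowed to have reflex corners (no bound is assumed on concave curvature, and no differentiability is assumed), there is no unique inward normal at $x$, and the normal of $D^*$ at $x$ (the direction $p^*-x$) need not coincide with the normal of the chosen $U_x$ (the direction of its center from $x$).  If they differ, the interpolated disk $D_s$ for small $s>0$ contains $x$ in its interior, so it already fails to lie in $\overline{\Int\gamma}$ for a trivial reason, and $s^*=0$ even without a second contact point.  One would first have to argue that $U_x$ can be \emph{chosen} with the same normal as $D^*$ at $x$, which is an extra step.

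This is genuinely different from what the paper does, and the paper's route is chosen precisely to avoid the multi-contact analysis.  Rather than trying to enlarge a single inscribed disk, the paper proves a quantitative lower bound (Lemma~\ref{lowerLemma}): under bounded convex curvature, every disk $D\subset\Int\gamma$ with $|\gamma\cap\partial D|\geq 2$ has radius at least a fixed $\eta_1>0$.  Combined with an upper bound $1-\eta_2$ on inscribed-disk radii (Lemma~\ref{upperLemma} under the contradiction hypothesis), this feeds a recursive construction that carves $\Int\gamma$ into a strictly nested sequence of regions $A_0\supset A_1\supset\cdots$, with each difference $A_n\setminus A_{n+1}$ containing a fresh disk $E_{n+1}$ of radius $\eta=\min(\eta_1,\eta_2/2)$.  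The resulting infinitely many pairwise disjoint $\eta$-disks inside the bounded set $\Int\gamma$ give the contradiction.  No rectifiability, no smoothness, and no analysis of the contact geometry of a maximal inscribed disk is needed.  In short: your proposal is a sensible Blaschke/Pestov--Ionin-flavored plan, but the step you flag as ``the genuine difficulty'' is in fact the whole theorem and is left unproved.
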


The  theorem does not hold if we replace the word ``convex'' with ``concave'' --- any circle of radius smaller than 1 provides a counterexample.

An appealing property of curves of bounded convex curvature is that they can be composed as described in the following observation (also see Figure~\ref{fig:union}).

\begin{figure}
\centering
\includegraphics{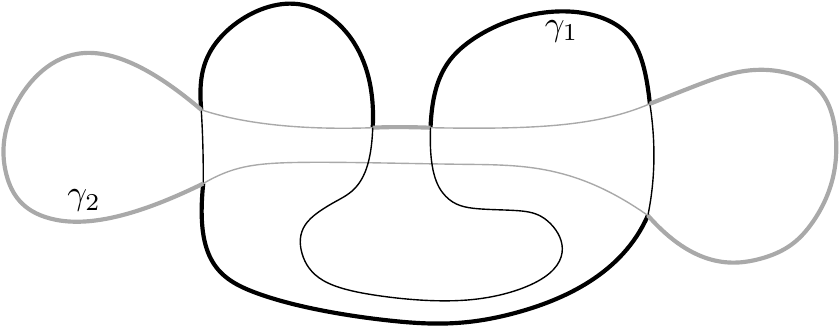}
\caption{Illustration for Observation~\ref{obs:union}.
The fat curve is the composition $\gamma_3$ of $\gamma_1$ (black) and $\gamma_2$ (gray).}
\label{fig:union}
\end{figure}

\begin{observation}\label{obs:union}
Let $\gamma_1$ and $\gamma_2$ be two curves of bounded convex curvature.
Consider the unbounded connected component $R$ of $\Ext\gamma_1\cap\Ext\gamma_2$.
If the boundary $\partial R$ is a Jordan curve $\gamma_3$, then $\gamma_3$ has bounded convex curvature.
\end{observation}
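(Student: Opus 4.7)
The plan is to lift bounded convex curvature from $\gamma_1$ and $\gamma_2$ to $\gamma_3$ via two elementary topological facts: (a) every point of $\gamma_3$ lies on $\gamma_1 \cup \gamma_2$, and (b) $\Int \gamma_i \subseteq \Int \gamma_3$ for $i=1,2$. Given these, for any $x \in \gamma_3$ I may, without loss of generality, assume $x \in \gamma_1$, and simply reuse the disk $U_x$ and radius $\ee_x$ supplied by the bounded convex curvature of $\gamma_1$: then $\ball{x}{\ee_x} \cap U_x \subseteq \Int \gamma_1 \subseteq \Int \gamma_3$, which is exactly what \eqref{bccCond} demands of $\gamma_3$.

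To establish (a), I first identify $R$ with $\Ext \gamma_3$. Since $\gamma_3$ is a Jordan curve and $R$ is open, connected, unbounded, with $\partial R = \gamma_3$, the set $R$ lies in the unbounded component of $\RR^2 \setminus \gamma_3$; equality follows because $\Ext \gamma_3 \setminus R$ is open in the connected set $\Ext \gamma_3$ (using $\partial R = \gamma_3$) while sufficiently distant points of $\Ext \gamma_3$ must already lie in $R$. Now suppose some $y \in \gamma_3$ did not belong to $\gamma_1 \cup \gamma_2$; then $y$ would lie in the open set $\Ext \gamma_1 \cap \Ext \gamma_2$, hence in some connected component $R'$ of that set. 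Because components of an open subset of $\RR^2$ are themselves open, $R'$ is an open neighborhood of $y$; if $R' = R$ this contradicts $y \in \partial R$, and otherwise $R'$ is disjoint from $R$, contradicting $y \in \overline{R}$.

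For (b), $\Int \gamma_1$ is connected, bounded, and disjoint from both $R = \Ext \gamma_3$ (since $R \subseteq \Ext \gamma_1$) and $\gamma_3$ (since $\gamma_3 \subseteq \overline{R} \subseteq \overline{\Ext \gamma_1} = \Ext \gamma_1 \cup \gamma_1$, which meets $\Int \gamma_1$ nowhere). Hence $\Int \gamma_1$ is a bounded connected subset of $\RR^2 \setminus \gamma_3$, so it lies in the bounded component $\Int \gamma_3$; the argument for $\gamma_2$ is symmetric. The only real obstacle in this proof is the careful verification of (a) and (b); no further geometric work on the unit disks themselves is needed, since a disk certifying bounded convex curvature of $\gamma_1$ or $\gamma_2$ at $x$ automatically certifies it for $\gamma_3$.
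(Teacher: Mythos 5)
Your proof is correct, and it is very likely the argument the authors have in mind: the paper states Observation~\ref{obs:union} without proof, treating it as routine, and the natural route is exactly yours — show $\gamma_3 \subseteq \gamma_1 \cup \gamma_2$ and $\Int\gamma_i \subseteq \Int\gamma_3$, and then the BCC witness disk $U_x$ for $\gamma_1$ (or $\gamma_2$) at $x$ serves unchanged as a witness for $\gamma_3$. Your verifications of (a) and (b) are sound: identifying $R$ with $\Ext\gamma_3$ via the openness of $\Ext\gamma_3 \setminus R$ in the connected set $\Ext\gamma_3$, the contradiction argument placing any $y \in \gamma_3 \setminus(\gamma_1\cup\gamma_2)$ inside an open component of $\Ext\gamma_1\cap\Ext\gamma_2$ disjoint from or equal to $R$, and the containment $\Int\gamma_1 \subseteq \RR^2\setminus(R\cup\gamma_3) = \Int\gamma_3$ all hold up.
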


Note that this result does not hold for curves of bounded curvature.
Indeed the Jordan curves $\gamma_1$ and $\gamma_2$ in Figure~\ref{fig:union} both have bounded curvature, whereas their composition $\gamma_3$ only has bounded convex curvature.

In Section~\ref{sec:app} we will explain how curves of bounded convex curvature naturally arise in problems related to computer-aided manufacturing, but first we discuss related work.

\subsection{Related work}\label{relatedWork}

All previously studied notions of bounded curvature are more restrictive, and moreover defined in terms of a parameterization of the curve, contrary to our notion of bounded convex curvature.
The curvature is often defined for curves $\gamma$ that are two times continuously differentiable and parameterized by arclength.
Then the (unsigned) curvature at $s$ is simply $\|\gamma''(s)\|$, and a curve $\gamma$ is defined to have bounded curvature if $\|\gamma''(s)\|\leq 1$ for all $s$.
We say that such curves have \emph{strongly bounded curvature} in order to avoid confusion with the curves of bounded curvature introduced in this article.
Pestov and Ionin~\cite{pestov1959largest} proved that the interior of every curve of strongly bounded curvature contains an open unit disk.
We denote this theorem as the \emph{Pestov--Ionin theorem}.

The Pestov--Ionin theorem has often been applied to problems in robot motion planning and related fields~\cite{abrahamsen2016finding,agarwal2002curvature,ahn2012reachability,ayala2015length,lazard1998complexity}.
In Section~\ref{sec:app}, we describe how curves of bounded convex curvature naturally arise in problems related to pocket machining.

Dubins~\cite{dubins1957curves} introduced the class of curves of \emph{bounded average curvature} as the curves $\gamma$ parameterized by arclength that are differentiable such that for all $s_1,s_2$, we have
\begin{align}\label{bacCond}
\|\gamma'(s_1)-\gamma'(s_2)\|\leq |s_1-s_2|.
\end{align}
For a curve $\gamma$ of bounded average curvature, the second derivative $\gamma''$ is not necessarily defined everywhere, but since $\gamma'$ satisfies the Lipschitz condition~\eqref{bacCond}, it follows that $\gamma''$ is defined almost everywhere.
Dubins mentioned that if $\gamma$ is a curve parameterized by arclength for which $\gamma''$ exists everywhere, then $\gamma$ has bounded average curvature if and only
if $\gamma$ has strongly bounded curvature.
Ahn et al.~\cite{ahn2012reachability} proved that the Pestov--Ionin theorem holds for curves of bounded average curvature, and their proof is analogous to that of Pestov and Ionin.
In particular, both proofs rely on the curve $\gamma$ being rectifiable, i.e., having finite length.
However, it is not at all clear from our more general definition that a curve $\gamma$ of bounded convex curvature is rectifiable, so that approach cannot easily be applied in our case.
Instead, our proof shows that if $\Int \gamma$ contains no unit disk, then there exists an $\alpha>0$ such that $\Int \gamma$ contains infinitely many pairwise disjoint disks of radius $\alpha$. As $\gamma$ is bounded, this is of course a contradiction.

Pankrashkin~\cite{Pankrashkin2015} gave a proof that the interior of a smooth Jordan curve of strongly bounded curvature has area at least $\pi$.
This of course follows from the Pestov--Ionin theorem, but Pankrashkin proved it by other means.

Note that the requirement on the curvature of curves of strongly bounded and bounded average curvature is completely symmetric with respect to the curve turning to the left and  right when traversed in positive direction.
In contrast to that, Howard and Treibergs~\cite{howard1995reverse} introduced a class $\KK$ of curves satisfying an asymmetric condition on the curvature, namely the curves $\gamma$ parameterized by arclength such that $\gamma'$ is absolutely continuous and
$$\langle \gamma'(s+h)-\gamma'(s),\mathbf n(s)\rangle\leq h$$
for all $s$ and $0<h<\pi$, where $\langle\cdot, \cdot\rangle$ is the dot-product and $\mathbf n(s)=\gamma'(s)^\bot$ is the unit normal.
They proved the Pestov--Ionin theorem for the Jordan curves in $\KK$.
Abrahamsen and Thorup~\cite{abrahamsen2016finding} introduced a class of Jordan curves related to $\KK$, but where the curves may have sharp concave corners without a well-defined tangent.
They gave a proof of a version of the Pestov--Ionin theorem for that class of curves.


It can be shown that each of the classes of Jordan curves mentioned here are subsets of the curves of bounded convex curvature.
It is therefore natural to investigate whether the Pestov--Ionin theorem holds for all curves of bounded convex curvature, which is exactly the statement of Theorem~\ref{MAINTHM}.

\section{Application to Pocket Machining}\label{sec:app}

\begin{figure}
\centering
\includegraphics[width=0.6\textwidth]{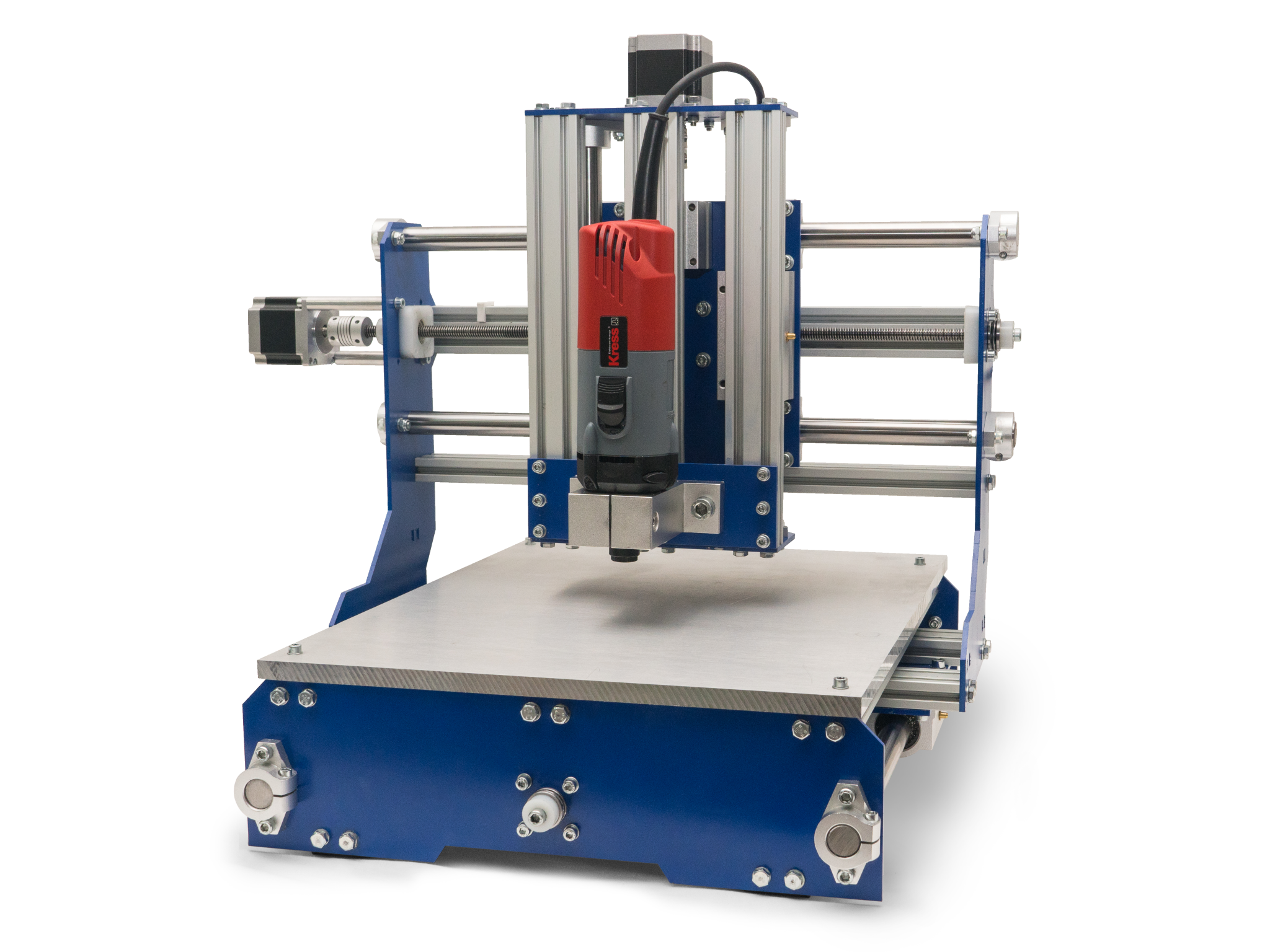}\quad
\includegraphics[width=0.35\textwidth]{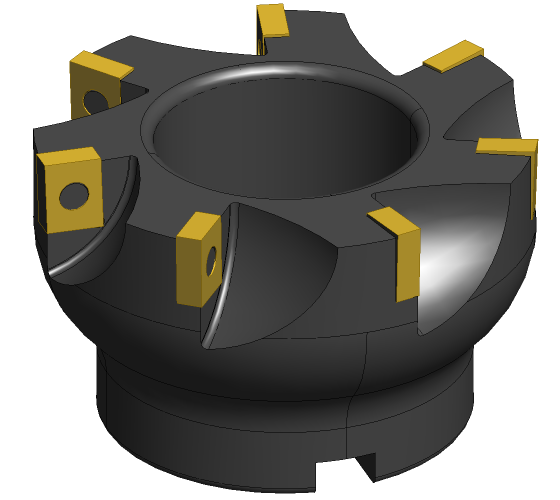}
\caption{Left: A milling machine.
The model is the Rabbit Mill v3.0 from SourceRabbit, who kindly provided permission to use the picture.
\copyright\ SourceRabbit.
Right: A milling tool. Picture by Rocketmagnet, licensed under CC BY-SA 3.0.}
\label{millFig}
\end{figure}

\begin{figure}
\centering
\includegraphics[width=\textwidth]{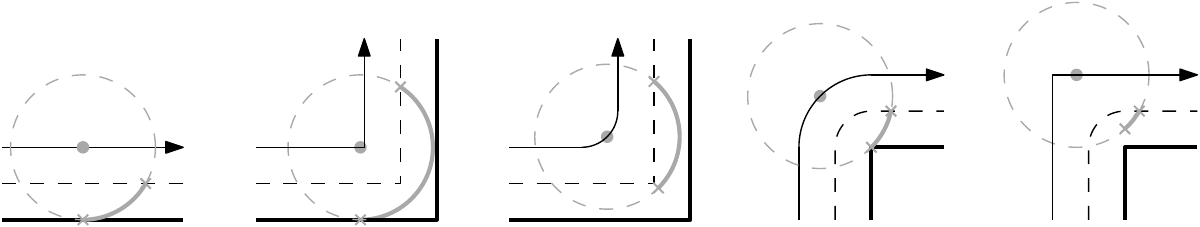}
\caption{In each of these four situations, the thick black curve is the boundary $\partial S$ of the pocket.
The remaining material in the pocket is ensured to be between the dashed black curve and $\partial S$.
The boundary of the tool $\tool$ is the dashed circle, and the solid part of the circle between the two crosses is the maximum part that can be in engagement with the material, i.e., the largest possible portion of the tool boundary cutting away material.
In the third picture, the convex corner on the path in the second picture has been rounded by an arc, thus bounding the convex curvature and reducing the maximum engagement.
The two rightmost pictures show two ways of going around a concave corner of $\partial S$.
In both cases, the maximum engagement is smaller than when the tool
follows a line segment of $\partial S$ (the case of the first picture).}
\label{cuttingCloseups}
\end{figure}

In this section we explain why it is sometimes natural to restrict oneself to curves of bounded convex curvature when choosing toolpaths for pocket machining. 
Pocket machining is the process of cutting out a pocket of some specified shape in a piece of material, such as a block of metal or wood, using a milling machine; see Figure~\ref{millFig} (left).

We are given a compact region $S$ of the plane whose boundary $\partial S$ is a Jordan curve.
The task is to remove the material in $S$ using a milling machine.
Suppose that we have already removed all material in $S$ except for a thin layer close to the boundary $\partial S$ (another coarser tool has removed most of the material, but is not fine enough to do the boundary itself).
In order to remove the remaining material, we are using a tool $\tool$, which can be thought of as a disk of some radius $r$, and we have to specify the toolpath.
The toolpath is a curve that the center of $\tool$ should follow, and the material removed is the area swept over by $\tool$ as it does so.
In practice, the tool has sharp teeth that cuts away the material as the tool spins at high speed; see Figure~\ref{millFig} (right).
The maximum thickness of the layer of remaining material is some fraction of the tool radius $r$, carefully chosen in order to limit the load on the tool.

It is an advantage if the tool center moves with constant speed while the tool is removing material, since that gives a higher surface quality of the resulting part.
Since the tool moves at constant speed, the load on the tool is heavier in a neighborhood around a convex turn than when it follows as straight line, since it has to remove more material per time unit.
In contrast to this, the load is lighter in a neighborhood around a concave turn.
See Figure~\ref{cuttingCloseups} for an illustration of this.
If the load is too heavy, the accuracy and surface quality will be inferior, and the tool can even break~\cite{han2015precise}.
It has been recommended to round the convex corners of the toolpath by circular arcs of a certain radius in order to decrease the load~\cite{choy2003corner,pateloup2004corner}.
In our terminology, this is the same as requiring the toolpath to have bounded convex curvature.
By restricting the convex curvature, we will inevitably leave more material that cannot be removed by the tool.
This can be removed by other tools that are more expensive to use in terms of machining time.

If the toolpath consists of all points at distance $r$ to $\partial S$, the concave curvature will be bounded by $1/r$, since the tool center will be ``rolling'' around any concave corner $v$ of $\partial S$ using a circular arc $A$ of radius $r$ (as in the fourth picture in Figure~\ref{cuttingCloseups}).
However, a recommended alternative way to get around $v$ is to follow the tangents to the endpoints of $A$ (as in the fifth picture in Figure~\ref{cuttingCloseups}---note that the tool will not remove any material when the center is in a neighborhood around the intersection point of the tangents).
Experience shows that this results in the corner $v$ being cut much sharper and more precisely~\cite{park2003mitered}.
This shows that the toolpaths arising in this context are required to have bounded convex curvature, whereas no bound can be given on the concave curvature.

Abrahamsen and Thorup~\cite{abrahamsen2016finding} studied the computational problem of computing the maximum region with a boundary of bounded convex curvature inside a given region in the plane, which defines the maximum region that can be cleared by the tool using a toolpath of bounded convex curvature.
A version of the Pestov--Ionin theorem (mentioned in the introduction) was used to establish the maximality of the region returned by the algorithm described in the article.

\section{Proving Theorem~\ref{MAINTHM}}
The proof of Theorem~\ref{MAINTHM} is by contradiction.
We assume that $\gamma$ is a curve of bounded convex curvature with an interior containing no open unit disk.
We then show that there exists an $\alpha>0$ such that $\Int \gamma$ contains infinitely many pairwise disjoint disks of radius $\alpha$. 
As $\Int \gamma$ is bounded, this is a contradiction.

To construct these disks we need to prove a special property of curves of bounded convex curvature, namely that the radii of disks $D\subset \Int \gamma$ having $|\gamma\cap\partial D|\geq 2$ are lower bounded by some constant $\eta>0$ depending only on $\gamma$.

Our first step is to set up an alternative condition that guarantees that $\gamma$ \emph{does not} have bounded convex curvature, as stated in Lemma~\ref{diamLemma} below.
We start with the following lemma; see Figure~\ref{fig:suffNotCond}.


\begin{figure}
    \centering
    \includegraphics{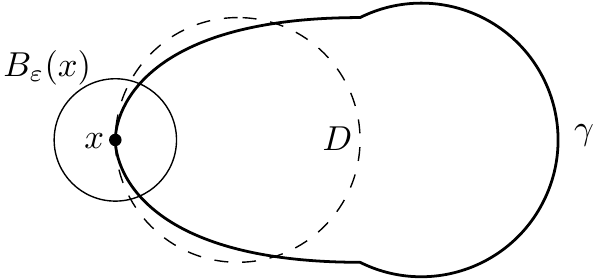}
    \caption{The situation described in Lemma~\ref{suffNotCond}. The curve $\gamma$ does not have bounded convex curvature.}
    \label{fig:suffNotCond}
\end{figure}

\begin{lemma}\label{suffNotCond}
Let $\gamma$ be a Jordan curve and consider a point $x$ on $\gamma$.
If there exists an open unit disk $D$ where $x\in\partial D$ such that
\begin{enumerate}
 \item
  there exists $ \ee>0$ such that $\ball{x}{\ee}\cap\Int\gamma\subset D$, and
\label{suffNotCond1}

\item
for all $\eta>0$ we have $\gamma\cap\ball{x}{\eta}\cap D\neq\emptyset$,
\label{suffNotCond2}
\end{enumerate}
then $\gamma$ does not have bounded convex curvature.
\end{lemma}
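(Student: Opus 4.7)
The plan is to argue by contradiction. Suppose $\gamma$ \emph{does} have bounded convex curvature at $x$, witnessed by some open unit disk $U_x$ with $x\in\partial U_x$ and some $\ee_x>0$ such that $\ball{x}{\ee_x}\cap U_x\subset\Int\gamma$. Setting $\delta=\min(\ee,\ee_x)$ and combining with hypothesis~\ref{suffNotCond1} gives
\[
\ball{x}{\delta}\cap U_x \ \subset\ \ball{x}{\delta}\cap\Int\gamma \ \subset\ D,
\]
so the whole task reduces to the purely geometric claim that two open unit disks $U_x$ and $D$ sharing the boundary point $x$, with $U_x$ locally contained in $D$ near $x$, must actually coincide.

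To handle that geometric step I would write the centers of $U_x$ and $D$ as $p=x+a$ and $q=x+b$ with $\|a\|=\|b\|=1$. Expanding $\|x+tu-p\|^2<1$ for a unit vector $u$ and $t>0$ shows that $x+tu\in U_x$ iff $t<2\langle u,a\rangle$, and analogously $x+tu\in D$ iff $t<2\langle u,b\rangle$. If $a\neq b$, then the open half-circle $\{u:\langle u,a\rangle>0\}$ is not contained in $\{u:\langle u,b\rangle>0\}$, so I can choose a unit vector $u$ with $\langle u,a\rangle>0$ and $\langle u,b\rangle\leq 0$. Then for every sufficiently small $t>0$, the point $x+tu$ lies in $\ball{x}{\delta}\cap U_x$ but is not in $D$, contradicting the local containment just derived. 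Hence $a=b$ and so $U_x=D$.

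The conclusion then falls out immediately. With $U_x=D$ we get $\ball{x}{\ee_x}\cap D\subset\Int\gamma$, which, since $\gamma$ and $\Int\gamma$ are disjoint, forces $\gamma\cap\ball{x}{\ee_x}\cap D=\emptyset$, flatly contradicting hypothesis~\ref{suffNotCond2} applied with $\eta=\ee_x$. The only non-bookkeeping step is the two-unit-disks lemma, and I expect that to be the main thing to write out cleanly; but it is elementary, essentially the statement that two distinct unit circles through $x$ cross transversally at $x$, so I do not foresee any real obstacle.
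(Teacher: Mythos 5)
Your proof is correct and follows essentially the same route as the paper's: assume bounded convex curvature with witness $U_x$, combine with hypothesis~\ref{suffNotCond1} to get $\ball{x}{\delta}\cap U_x\subset D$, deduce $U_x=D$ (two distinct unit disks through $x$ cannot nest locally near $x$), and then contradict hypothesis~\ref{suffNotCond2}. The only difference is that you spell out the dot-product computation behind the two-unit-disks step, which the paper dispatches in a single sentence.
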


\begin{proof}
Assume for contradiction that $\gamma$ has bounded convex curvature, and choose $\ee_x>0$ and $U_x$ such that condition~\eqref{bccCond} in the definition of bounded convex curvature is satisfied for $x$.
We must show that for any unit disk $D$ with $x \in \partial D$, either condition~\ref{suffNotCond1} or~\ref{suffNotCond2} of Lemma~\ref{suffNotCond} fails.
Let $D$ be such a unit disk and suppose $\ee>0$ is such that condition~\ref{suffNotCond1} of the lemma is satisfied.
Let $\eta=\min\{\ee_x,\ee\}$.
Then,
$$\ball{x}{\eta}\cap U_x\subset
\ball{x}{\eta}\cap\Int\gamma \subset D.$$
This implies that $U_x=D$: Indeed, $U_x$ and $D$ are two unit disks with $x$ on the boundary, so if $U_x\neq D$, then $U_x\setminus D$ would contain points arbitrarily close to $x$.
Now $\ball{x}{\eta}\cap U_x \subset \Int \gamma$, so $\gamma \cap \ball{x}{\eta}\cap U_x  =\emptyset$. As $U_x=D$, condition~\ref{suffNotCond2} is not satisfied. This completes the proof.
\end{proof}


For any Jordan curve $\gamma$ and two distinct points $a$ and $b$ on $\gamma$, we denote by $\gamma[a,b]$ the closed interval on $\gamma$ from $a$ to $b$ in the positive direction. 
We may for example apply this notation to the boundary curve $\partial D$ for a disk $D$. 
By $\gamma(a,b)$, we denote the open interval $\gamma[a,b]\setminus\{a,b\}$.
While it might be intuitively clear what it means to traverse $\gamma$ in the positive or negative direction, we give a precise definition in Appendix~\ref{sec:appen}.

We require the following lemma which phrased informally states that if $\gamma$ is traversed positively, the interior of $\gamma$ is ``to the left'' of the curve.
\begin{lemma}\label{leftrightlemma}
Let $p$ be a point on a Jordan curve $\gamma$, and let $U$ be an open disk with center $p$, sufficiently small so that $\gamma$ is not contained in $U$.
The intersection of $U$ and $\gamma$ is a collection of intervals of $\gamma$ of which one, say $\gamma(a,b)$, contains $p$. Consider the Jordan curves
$$
\alpha^+=\gamma[a,b]\cup \partial U[b,a] \quad \text{and} \quad \alpha^-=\gamma[a,b]\cup \partial U[a,b]
$$
Then $\Int \gamma$ and $\Int \alpha^+$ coincide near $p$, that is, there exists a small disk $V\subset U$ centered at $p$ such that $\Int \gamma \cap V=\Int \alpha^+ \cap V$. Similarly $\Ext \gamma$ and $\Int \alpha^-$ coincide near $p$.
\end{lemma}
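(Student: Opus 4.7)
The plan is to combine the Jordan curve theorem, applied to $\alpha^+$ and $\alpha^-$, with the defining property of the positive direction on $\gamma$: namely that $\Int\gamma$ lies locally to the left of $\gamma$ when $\gamma$ is traversed in the positive direction (this is the convention set up in the appendix).

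First, I would shrink $U$ if necessary so that $\gamma(a,b)$ is the only component of $\gamma\cap U$ containing $p$ and so that $\gamma[a,b]$ meets $\partial U$ only at its two endpoints $a$ and $b$. Then $\gamma[a,b]$ is a simple arc in $\overline U$ joining two boundary points of $U$, so by a standard planar-topology argument the open set $U\setminus\gamma[a,b]$ has exactly two connected components, call them $R^+$ and $R^-$, where $R^+$ is the one whose closure in $\partial U$ equals $\partial U[b,a]$.

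Second, I would identify $R^+=\Int\alpha^+$ and $R^-=\Int\alpha^-$. Because $\alpha^+\subset\overline U$, the unbounded complementary component of $\alpha^+$ contains $\mathbb{R}^2\setminus\overline U$, which forces $\Int\alpha^+\subset U$; and since $\Int\alpha^+$ is connected and disjoint from $\gamma[a,b]\subset\alpha^+$, it is contained in either $R^+$ or $R^-$. A point just inside $U$ near the middle of $\partial U[b,a]$ clearly lies in $R^+$ (by the labeling) and in $\Int\alpha^+$ (a radial push inward from $\alpha^+$ crosses into the bounded component), giving $\Int\alpha^+\subset R^+$; the reverse inclusion follows because $R^+$ is bounded, connected, and disjoint from $\alpha^+$. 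Symmetrically, $\Int\alpha^-=R^-$.

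Third, I would invoke the definition of positive orientation from the appendix to conclude that near $p$, $\Int\gamma$ coincides with the side of $\gamma[a,b]$ that corresponds to $R^+$: traversing $\gamma[a,b]$ from $a$ to $b$ has $\Int\gamma$ on the left, and since $\partial U$ is oriented counterclockwise, the arc $\partial U[b,a]$ also sits on the left of $\gamma[a,b]$, which is precisely the side containing $R^+$. Choosing any disk $V\subset U$ centered at $p$ small enough that $V\cap\gamma\subset\gamma(a,b)$ then yields $V\cap\Int\gamma=V\cap R^+=V\cap\Int\alpha^+$, and symmetrically $V\cap\Ext\gamma=V\cap R^-=V\cap\Int\alpha^-$, as desired. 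The main obstacle is really the bookkeeping in this last step --- carefully translating the appendix's abstract definition of ``positive direction'' into the concrete geometric statement that the left-hand side of $\gamma[a,b]$ at $p$ is indeed $R^+$; once this is matched, everything else is a routine application of the Jordan curve theorem.
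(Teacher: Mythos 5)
Your first two steps are sound in outline and take a somewhat different route from the paper: where Lemma~\ref{leftrightlemma2} derives the decomposition $U=\gamma(a,b)\cup\Int\alpha^+\cup\Int\alpha^-$ and the identifications via explicit winding-number computations, you appeal to a planar-separation theorem for arcs (``$U\setminus\gamma[a,b]$ has exactly two components'') and then identify each half with $\Int\alpha^\pm$ by testing a point near the middle of $\partial U[b,a]$. That part is fine, though the shrinking of $U$ deserves a comment: replacing $U$ by a smaller disk changes $a,b$ and hence changes $\alpha^\pm$, so you would still owe an argument that $\Int\alpha^+$ for the original $U$ agrees near $p$ with $\Int(\alpha')^+$ for the shrunk disk.

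The genuine gap is your third step. You write that, by ``the definition of positive orientation from the appendix,'' traversing $\gamma[a,b]$ from $a$ to $b$ has $\Int\gamma$ on the left, and then merely have to match sides with $R^+$. But the appendix defines positive orientation purely in terms of argument variation ($\Arg_q\varphi=2\pi$ for $q\in\Int\gamma$). The claim that this abstract, winding-number definition of the positive direction agrees with the geometric, local picture ``interior on the left'' is \emph{exactly} the assertion of Lemma~\ref{leftrightlemma}; it does not follow from the definition by any short step, and for an arbitrary (non-differentiable) Jordan arc it is not a known off-the-shelf fact the authors could cite — they note explicitly that they found no suitable reference. So your step~3 is circular: it assumes the conclusion. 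You half-acknowledge this (``the main obstacle is really the bookkeeping in this last step\ldots once this is matched, everything else is a routine application of the Jordan curve theorem''), but that ``bookkeeping'' is the heart of the lemma and is left undone. The paper closes this gap by computing $\Arg_x\gamma$ on each of $\Int\alpha^\pm\cap V$, showing the two values differ by $2\pi$ with one of them equal to $0$, and letting the sign of the nonzero value determine which side is $\Int\gamma$ — this is the argument your proposal is missing.
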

We believe the result to be standard but we were unable to find an equivalent one in the literature, phrased for completely arbitrary Jordan curves, e.g., with no assumptions on the differentiability of the curve.
We will provide a proof in Appendix~\ref{sec:appen}.

Suppose $\gamma$ is a Jordan curve, $a,b$ are distinct points on $\gamma$, and $D$ is an open disk satisfying $a,b\in\partial D$ and $\gamma[a,b]\cap\overline D=\{a,b\}$.
If $R$ is the open region bounded by the Jordan curve $\gamma[a,b]\cup \partial D[a,b]$, then either $D\subset R$ or $D\cap R=\emptyset$.
In the former case we say that $\gamma$ \emph{winds negatively} around $D$ from $a$ to $b$ and in the latter that $\gamma$ \emph{winds positively} around $D$ from $a$ to $b$.

\begin{figure}
\centering
\includegraphics[scale=1]{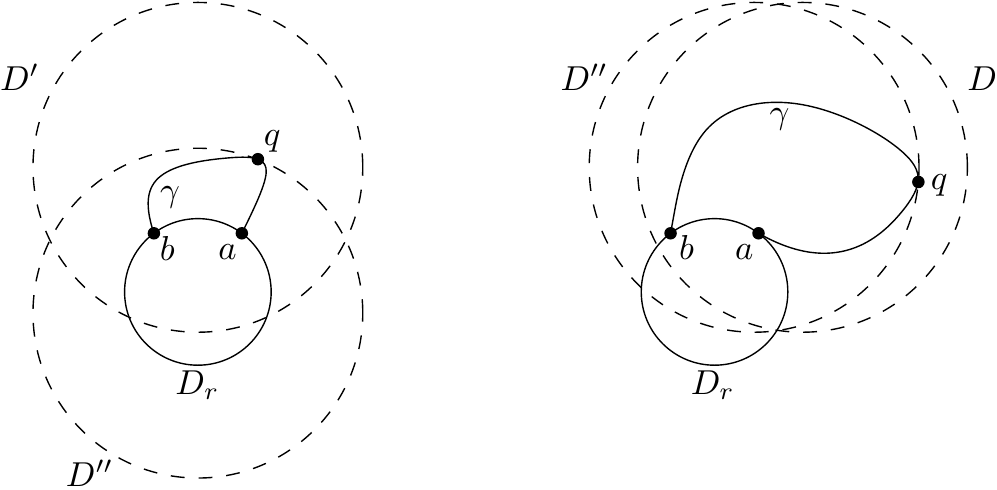}
\caption{The two cases in the proof of Lemma~\ref{diamLemma}.}
\label{fig:diamLemma}
\end{figure}

\begin{lemma}\label{diamLemma}
Let $\gamma$ be a Jordan curve and consider an interval $\gamma[a,b]$ of $\gamma$ such that $\gamma[a,b]$ is contained in an open unit disk $D$.
Suppose there is an open disk $D_r$ of radius $r\leq 1$ such that $\gamma[a,b]\cap\overline {D_r}=\{a,b\}$ and $\gamma$ winds positively around $D_r$ from $a$ to $b$.
Then $\gamma$ does not have bounded convex curvature.
\end{lemma}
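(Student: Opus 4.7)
My plan is to apply Lemma~\ref{suffNotCond} to derive the contradiction. I need to produce a point $x^*\in\gamma$ and an open unit disk $D^*$ with $x^*\in\partial D^*$ satisfying the two conditions of that lemma.

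For the construction, let $\mathcal F$ be the family of open unit disks $D^*$ with $\gamma[a,b]\subset\overline{D^*}$. The hypothesized unit disk $D$ belongs to $\mathcal F$, so $\mathcal F$ is non-empty, and it is compact in center-space. I would maximize over $\mathcal F$ the distance $|c^*-c_r|$ of the center $c^*$ of $D^*$ from the center $c_r$ of $D_r$. At an optimizer $D^*$ the containment $\gamma[a,b]\subset\overline{D^*}$ must be active, i.e.\ there is some $x^*\in\gamma[a,b]$ with $x^*\in\partial D^*$; otherwise one could push $c^*$ slightly further from $c_r$ while maintaining the containment, contradicting optimality. The two cases in Figure~\ref{fig:diamLemma} correspond, I believe, to whether $x^*$ lies in the open arc $\gamma(a,b)$ (interior tangency) or coincides with an endpoint $a$ or $b$ (endpoint tangency, requiring a small variant).

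To verify Lemma~\ref{suffNotCond} at $(x^*,D^*)$: since $\gamma[a,b]\subset\overline{D^*}$ and $x^*\in\gamma[a,b]\cap\partial D^*$, the points of $\gamma$ near $x^*$ lie in $\overline{D^*}$, and in the non-degenerate case the tangency forces them (away from $x^*$) to be strictly inside $D^*$, yielding condition~\ref{suffNotCond2}. For condition~\ref{suffNotCond1}, I would use Lemma~\ref{leftrightlemma} together with the positive winding of $\gamma$ around $D_r$ to pin down the side of $\gamma$ on which $\Int\gamma$ lies near $x^*$: maximizing $|c^*-c_r|$ forces $D^*$ to lie on the side of $\gamma$ opposite to $D_r$, and positive winding, which places $D_r$ in the unbounded complement of the Jordan curve $\gamma[a,b]\cup\partial D_r[a,b]$, should imply that this opposite side is precisely the $\Int\gamma$-side at $x^*$, so $\ball{x^*}{\ee}\cap\Int\gamma\subset D^*$.

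The main obstacle I expect is making that last link precise---cleanly arguing via Lemma~\ref{leftrightlemma} and the positive winding that the maximization direction aligns with the $\Int\gamma$-side of $\gamma$ at $x^*$---and handling the endpoint case $x^*\in\{a,b\}$ in parallel. The hypothesis $\gamma[a,b]\subset D$ is essential throughout: it guarantees $\mathcal F\ne\emptyset$ so the construction gets off the ground, and it prevents $\gamma[a,b]$ from swinging so wide that no inside tangency with a unit disk is forced.
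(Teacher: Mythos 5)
Your overall strategy---translate $D$ to a larger unit disk $D^*$ still containing $\gamma[a,b]$, force a tangency with $\gamma[a,b]$, and feed the tangency point and $D^*$ into Lemma~\ref{suffNotCond}---is the same as the paper's. The difference is the choice of direction in which to move $D$, and that is exactly where your proposal has a genuine gap.

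Maximizing $\lvert c^*-c_r\rvert$ does not, in general, produce a tangency point $x^*$ in the \emph{open} arc $\gamma(a,b)$. Concretely, take $r=1$, $D_r$ centered at the origin, $a=(0.6,0.8)$, $b=(-0.6,0.8)$, and let $\gamma[a,b]$ be an arc of a slightly smaller circle centered at $(0,0.1)$ passing through $a$ and $b$ and bulging just outside $\overline{D_r}$. The feasible centers $c^*$ form (essentially) the lens $\{c:\ \lvert c-a\rvert\le 1,\ \lvert c-b\rvert\le 1\}$, and the point of this lens farthest from the origin is its tip $(0,1.6)$, at which $a$ and $b$ lie on $\partial D^*$ while every interior point of $\gamma(a,b)$ lies strictly inside $D^*$. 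So your maximizer is tangent \emph{only at the endpoints}. This is not a small variant of the interior-tangency case: the argument that pins down condition~\ref{suffNotCond1} relies on applying Lemma~\ref{leftrightlemma} at the tangency point $q$ to show that $\Int\gamma$ and $\Int\gamma'$ (where $\gamma'=\gamma[a,b]\cup\partial D_r[a,b]$) coincide near $q$, and this only works when a small neighborhood of $q$ meets $\gamma$ and $\gamma'$ in the same set, i.e.\ when $q\in\gamma(a,b)$. At $q=a$ (or $b$) the two curves branch apart, and condition~\ref{suffNotCond2} is also in jeopardy because points of $\gamma$ near $a$ belong to $\gamma[b,a]$, about which the hypotheses say nothing. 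The paper's two-stage translation (first slide the center onto the perpendicular bisector of $ab$ if possible, else slide it toward that bisector; in each case use Claim~\ref{claim1} with $r\le 1$) is designed precisely to rule out $a,b\in\partial D''$, and that is the content you would have to replace.

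A second, smaller point: your justification of condition~\ref{suffNotCond1}---``maximizing $\lvert c^*-c_r\rvert$ forces $D^*$ to lie on the side of $\gamma$ opposite to $D_r$, and positive winding should imply this opposite side is the $\Int\gamma$-side''---is not how the argument actually closes, and I don't see how to make it precise. What the paper uses is that \emph{any} unit disk whose closure contains $\gamma[a,b]$ automatically has $\Int\gamma'\subset D^*$ (this is Claim~\ref{claim2}, which needs Claim~\ref{claim1}), combined with two applications of Lemma~\ref{leftrightlemma}: one on $\gamma'$ at a point of $\partial D_r(a,b)$ to show $\gamma[a,b]=\gamma'[a,b]$, and one at $q$ to conclude $\Int\gamma$ and $\Int\gamma'$ agree locally. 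None of that depends on the maximization direction; the direction matters only for the endpoint-avoidance above. To repair your proof you would need both to establish the analogue of Claim~\ref{claim2} and to change your variational setup so that endpoint tangency is provably excluded.
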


\begin{proof}
The general outline of the proof is as follows: We first make a translation of $D$ into a disk $D''$ such that $\overline{D''}$ still contains $\gamma[a,b]$ and such that $\partial D''$ meets $\gamma(a,b)$ in at least one point. We then argue that we may choose a point $q\in \partial D'' \cap \gamma(a,b)$ for which Lemma~\ref{suffNotCond} applies to show that $\gamma$ does not have bounded convex curvature.

By translating and rotating we may assume about the coordinates that $D_r$ is centered at the origin and that $a=(s_0,t_0)$ and $b=(-s_0,t_0)$ for some $s_0,t_0$ with $s_0>0$ and $s_0^2+t_0^2=r^2$ (note that $t_0$ may be negative).
Suppose that $D$ is centered at $(s_1,t_1)$ and that $s_1\geq 0$ (the case $s_1\leq 0$ is dealt with in a symmetric way).
Also define the Jordan curve $\gamma'=\gamma[a,b]\cup \partial D_r[a,b]$.

We start the proof by showing the following two claims.
\begin{claim}\label{claim1}
Let $p$ be a point on the arc $\partial D_r(a,b)$. Let $m$ be the midpoint of segment $ab$ and $v=p-m$. Consider the ray $\ell_p=\{p+\alpha v: \alpha>0\}$. Then $\ell_p$ intersects $\gamma(a,b)$.
\end{claim}
\begin{proof}[Proof of Claim~\ref{claim1}]
Let $V$ be an open disk centered at $p$, so small that $V\cap \gamma[a,b]=\emptyset$.
Further let $c,d\in \partial D_r$ be such that $V\cap \partial D_r=\partial D_r(c,d)$. 
Then $V \backslash \partial D_r(c,d)$ is the disjoint union of two open connected sets $V_1$ and $V_2$ satisfying $V_1\subset D_r$ and $V_2\cap \ell_p \neq \emptyset$. 
Moreover, $V_1$ and $V_2$ are both subsets of $\RR^2\setminus \gamma'$ and, being connected, they are each fully contained in either $\Int \gamma'$ or $\Ext \gamma'$.
Now as $p\in \gamma'$ and $\gamma'=\partial (\Int \gamma')$ by the Jordan curve theorem, it follows that either $V_1 \subset \Int \gamma'$ or $V_2\subset \Int \gamma'$.
But by the assumption on the winding direction of $\gamma$ from $a$ to $b$, we have $V_1 \cap \Int \gamma' \subset D_r \cap \Int \gamma'=\emptyset$, and so $V_2\subset \Int \gamma'$.
It follows that $\ell_p \cap \Int \gamma' \neq \emptyset$.
Furthermore, we trivially have that $\ell_p \cap \Ext \gamma'\neq \emptyset$ and so $\ell_p$ must intersect $\gamma'$.
This cannot happen at a point of $D_r[a,b]$ so $\ell_p$ must intersect $\gamma(a,b)$ as claimed.
\end{proof}
\begin{claim}\label{claim2}
Let $D_0$ be an open disk satisfying that $\gamma[a,b]\subset \overline{D_0}$. Then $\gamma'\subset \overline{D_0}$.
\end{claim}
\begin{proof}[Proof of Claim~\ref{claim2}]
It clearly suffices to show that $\overline{D_0}$ contains $\partial D_r(a,b)$. Take any point $p\in \partial D_r(a,b)$ and consider the line $\ell_p=\{p+\alpha v: \alpha>0\}$ from Claim~\ref{claim1} that intersects $\gamma(a,b)$ in some point $p+\alpha_0 v$ where $\alpha_0>0$. 
Now by assumption $\overline{D_0}$ contains $\gamma[a,b]$, hence also $p+\alpha_0v$. Since $a,b\in \overline{D_0}$, and $\overline{D_0}$ is convex, $\overline{D_0}$ contains the midpoint $m$ of segment $ab$. Finally $p$ is on the line segment between $m$ and $p+\alpha_0v$ so by convexity $\overline{D_0}$ contains $p$. Since $p$ was arbitrary, this establishes the claim.
\end{proof}

We now let $D'$ be the disk $\ball{(0,t_1)}{1}$. We split the proof into two cases depicted in Figure~\ref{fig:diamLemma}.

\textbf{Case 1: $\gamma[a,b]\subset\overline{D'}$.}
In this case, we let $t''\in \RR$ be minimal such that the closure of the unit disk $D''=\ball{(0,t'')}{1}$ contains $\gamma[a,b]$.

Consider the set of intersection points $P=\gamma[a,b]\cap\partial D''$, which is nonempty by construction.
We claim that $P$ contains neither $a$ nor $b$. To see this, note that the ray $\ell=\{(0,t):t> r\}$ intersects $\gamma(a,b)$ by Claim~\ref{claim1}. Now if $\partial D''$ contained $a$ (and thus by symmetry $b$) then, as $r\leq 1$, we would have $\ell\cap \overline{D''}=\emptyset$ and hence that $\ell\cap \gamma(a,b)=\emptyset$, a contradiction.
We conclude that $P$ contains neither $a$ nor $b$.

The set $\gamma(a,b)\setminus P$ is nonempty as $a,b\notin\partial D''$, and consists of pairwise disjoint open arcs.
Let $q\in P$ be an endpoint of such an arc.
We will now show that $q$ and $D''$ satisfy the conditions of Lemma~\ref{suffNotCond}, from which it follows that $\gamma$ does not have bounded convex curvature.

That $\gamma \cap \ball{q}{\ee} \cap D''\neq \emptyset$ for all $\ee>0$ is immediate as $q$ is  an endpoint of one of the open arcs in $\gamma(a,b) \backslash P$.
It thus suffices to check condition~\ref{suffNotCond1}, as follows.
First note that either $\partial D_r[a,b]=\gamma'[a,b]$ or $\partial D_r[a,b]=\gamma'[b,a]$.
However, if $\partial D_r[a,b]=\gamma'[a,b]$, we could apply Lemma~\ref{leftrightlemma} to $\gamma'$ with $p=(0,r)$ to conclude that $D_r\subset \Int \gamma'$.
But we assumed that $D_r\cap \Int \gamma'=\emptyset$ and so it follows that $\partial D_r[a,b]=\gamma'[b,a]$.
Since $\gamma'=\partial D_r[a,b]\cup \gamma[a,b]=\gamma'[a,b]\cup \gamma'[b,a]$, we conclude that $\gamma[a,b]=\gamma'[a,b]$.
Another application of Lemma~\ref{leftrightlemma}, this time with $p=q$, gives that $\Int \gamma$ and $\Int \gamma'$ coincide locally near $q$, that is, there exists an $\ee>0$ such that $\Int \gamma \cap \ball{q}{\ee}=\Int \gamma' \cap \ball{q}{\ee}$.
Now $\overline{D''}$ contains $\gamma[a,b]$ and hence $\gamma'$ by Claim~\ref{claim2}. Thus, $\Int \gamma' \subset D''$ and it follows that $\Int \gamma \cap \ball{q}{\ee}\subset D''$, as desired.


\textbf{Case 2: $\gamma[a,b]\not\subset\overline{D'}$.}
In this case, let $s''>0$ be minimal such that the closure of $D''=\ball{(s'',t_1)}{1}$ contains $\gamma[a,b]$.
As $s''>0$, $\partial D''$ contains neither $a$ nor $b$.
Letting $P=\gamma[a,b]\cap\partial D''$, the same argument as in Case~1 finishes the proof.
\end{proof}
We are slowly setting up the stage for the proof of Theorem~\ref{MAINTHM}. Intuitively, the following lemma is unsurprising. The lemma will be helpful for checking one of the conditions of Lemma~\ref{diamLemma}, hence making it easier to apply.
\begin{lemma}\label{positivelemma}
Let $\gamma$ be a Jordan curve and $D$ an open disk contained in $\Int \gamma$. Suppose that $a,b$ are distinct points on $\gamma$ such that $\gamma[a,b]\cap \overline{D}=\{a,b\}$. Then $\gamma$ winds positively around $D$ from $a$ to $b$, that is, $D\subset\Ext(\gamma[a,b]\cup \partial D[a,b]$). Similarly, $D\subset \Int(\gamma[a,b]\cup \partial D[b,a])$.
\end{lemma}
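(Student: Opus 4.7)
My plan is to show $D\subset\Ext\alpha$ for $\alpha:=\gamma[a,b]\cup\partial D[a,b]$, using connectedness of $D$ together with two local applications of Lemma~\ref{leftrightlemma} to transfer the positive orientation of $\gamma$ to $\alpha$.

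First I verify the setup. $\alpha$ is a Jordan curve, and $\alpha\cap D=\emptyset$: indeed $\gamma[a,b]\cap D\subset\gamma[a,b]\cap\overline D=\{a,b\}\subset\partial D$, so $\gamma[a,b]\cap D=\emptyset$, and $\partial D\cap D=\emptyset$. Since $D$ is connected and disjoint from $\alpha$, it lies entirely in $\Int\alpha$ or entirely in $\Ext\alpha$. I also observe that $\Ext\gamma\subset\Ext\alpha$: the set $\Ext\gamma$ is connected, unbounded, and disjoint from $\alpha$ (because $\overline D\subset\Int\gamma\cup\gamma$ gives $\partial D\cap\Ext\gamma=\emptyset$, and trivially $\gamma[a,b]\cap\Ext\gamma=\emptyset$), so $\Ext\gamma$ lies in the unbounded component $\Ext\alpha$.

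The heart of the argument is to pin down the positive orientation of $\alpha$. Pick any $p\in\gamma(a,b)$. Since $p\notin\overline D$, I can choose a small open disk $U$ centered at $p$ that misses $\overline D$ and meets $\gamma$ in a single short arc through $p$; then $\alpha\cap U=\gamma\cap U$ equals this arc. Apply Lemma~\ref{leftrightlemma} to $\gamma$ at $p$ with this $U$, and separately to $\alpha$ at $p$ with the same $U$. In each case the lemma identifies the local interior of the ambient curve with the interior of a Jordan curve built from the common short arc together with one of the two complementary subarcs of $\partial U$, where which subarc appears is determined by the positive direction of the ambient curve at $p$. Hence either $\Int\alpha$ coincides locally with $\Int\gamma$ near $p$ (if the positive directions of $\gamma$ and $\alpha$ at $p$ agree), or $\Int\alpha$ coincides locally with $\Ext\gamma$ near $p$ (if they disagree). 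The disagreement case is ruled out by $\Ext\gamma\subset\Ext\alpha$, so the positive directions agree at $p$.

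Now trace $\alpha$ positively: starting at $a$ we proceed along $\gamma[a,b]$ positively to $b$, then continue along $\partial D[a,b]$ from $b$ back to $a$, which is the reverse of the positive direction of $\partial D$ along this arc. Pick any $p'\in\partial D(a,b)$ and apply Lemma~\ref{leftrightlemma} to $\alpha$ at $p'$: the positive direction of $\alpha$ at $p'$ is the reverse of the positive direction of $\partial D$, and since $\Int\partial D=D$ lies on the left of the positive direction of $\partial D$, it lies on the right of the positive direction of $\alpha$, so the lemma places $\Int\alpha$ locally in the exterior of $D$. Hence points of $D$ arbitrarily close to $p'$ lie in $\Ext\alpha$, and by connectedness $D\subset\Ext\alpha$. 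The second statement $D\subset\Int(\gamma[a,b]\cup\partial D[b,a])$ follows by the same analysis applied to $\alpha':=\gamma[a,b]\cup\partial D[b,a]$: tracing $\alpha'$ positively traverses $\partial D[b,a]$ in its own positive direction on $\partial D$, so $\Int\alpha'$ coincides locally with $D$ at a point of $\partial D(b,a)$, giving $D\subset\Int\alpha'$. The main obstacle is the orientation step, which combines the double application of Lemma~\ref{leftrightlemma} with the preliminary fact $\Ext\gamma\subset\Ext\alpha$ to transfer orientation from $\gamma$ to $\alpha$.
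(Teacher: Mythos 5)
Your proof is correct and follows essentially the same route as the paper's: establish $\Int\alpha\subset\Int\gamma$ (equivalently $\Ext\gamma\subset\Ext\alpha$), use Lemma~\ref{leftrightlemma} at a point of $\gamma(a,b)$ to pin down that the positive orientation of $\alpha$ agrees with that of $\gamma$ along the shared arc (the paper phrases this as ruling out the case $\gamma'[b,a]=\gamma[a,b]$), and then apply Lemma~\ref{leftrightlemma} once more at a point of $\partial D(a,b)$ to place $D$ in $\Ext\alpha$. One small inaccuracy: you cannot in general choose $U$ so that $U\cap\gamma$ is a \emph{single} arc, but this does not matter, since Lemma~\ref{leftrightlemma} already accommodates several intersection arcs and only the arc through $p$ (which is the same for $\gamma$ and $\alpha$ once $U$ avoids $\overline D$) is used.
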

\begin{proof}
We only prove the first statement in the theorem as the proof of the second part is similar.
Letting $\gamma'=\gamma[a,b]\cup \partial D[a,b]$ we must show that $D\subset \Ext \gamma'$. As $D\subset \Int \gamma$ it must hold that $\Int \gamma' \subset \Int \gamma$. Now either $\gamma'[b,a]=\gamma[a,b]$ or $\gamma'[b,a]=\partial D[a,b]$. Suppose first that $\gamma'[b,a]=\gamma[a,b]$ and let $p$ be any point on $\gamma(a,b)$. Applying Lemma~\ref{leftrightlemma} we find that $\Int \gamma'$ and $\Ext \gamma$ coincide near $p$. This is a contradiction as $\Int \gamma'\subset \Int \gamma$. It follows that $\gamma'[b,a]=\partial D[a,b]$. Now choose any point $p\in \partial D(a,b)$. Again applying Lemma~\ref{leftrightlemma} we find that $\Ext \gamma'$ and $\Int \partial D$ coincide near $p$. As $D\subset \RR^2\setminus \gamma'$, it immediately follows that $D\subset \Ext \gamma'$, as desired.
\end{proof}
Now we can prove that if $\gamma$ has bounded convex curvature, then certain maximal disks contained in $\Int \gamma$ cannot be too small.
\begin{lemma}\label{lowerLemma}
Let $\gamma$ be a curve of bounded convex curvature. There exists a constant $\eta>0$ with the following property: If $\ball{x}{r}\subset \Int \gamma$ is an open disk of radius $r$, and $\partial \ball{x}{r}$ meets $\gamma$ in at least two points, then $r\geq \eta$.
\end{lemma}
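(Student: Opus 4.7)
The plan is to derive a contradiction: assuming there is no such $\eta$, I will produce for each sufficiently small $r$ a pair of points $a,b\in\gamma$ satisfying the hypotheses of Lemma~\ref{diamLemma}, thereby forcing $\gamma$ to lack bounded convex curvature.

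Assume for contradiction a sequence of open disks $\ball{x_n}{r_n}\subset\Int\gamma$ with $r_n\to 0$ and $P_n:=\gamma\cap\partial\ball{x_n}{r_n}$ of cardinality at least $2$. After passing to a subsequence, $x_n$ converges to some $x_\infty$, and since $\dist(x_n,\gamma)=r_n\to 0$, we must have $x_\infty\in\gamma$. I then parameterize $\gamma$ as a homeomorphism $\gamma\colon S^1\to\gamma(S^1)$ and let $t_0=\gamma^{-1}(x_\infty)$.

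The main technical step is to find, for all sufficiently large $n$, two adjacent points of $P_n$ that bound a ``short'' arc of $\gamma$. Uniform continuity of $\gamma$ supplies a $\mu\in(0,1/2)$ such that any arc whose parameter length is less than $\mu$ has diameter less than $1/2$, and therefore lies in an open unit disk centered at any of its points. Continuity of $\gamma^{-1}$ then yields a $\rho>0$ such that every point of $\gamma\cap\ball{x_\infty}{\rho}$ has parameter within $\mu/2$ of $t_0$. For $n$ large, every point of $P_n$ lies in $\ball{x_\infty}{\rho}$, so the parameters $\gamma^{-1}(P_n)$ are confined to an arc of $S^1$ of length at most $\mu$. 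The complement of $\gamma^{-1}(P_n)$ in $S^1$ is a disjoint union of open intervals; the ``wrap-around'' component has length at least $1-\mu>\mu$, while every other component has length less than $\mu$. Since $|P_n|\geq 2$, at least one short component exists.

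Pick such a short interval, and let $a,b\in P_n$ be its endpoints, labeled so that $\gamma[a,b]$ in the positive direction of $\gamma$ is this short arc. Then $\gamma[a,b]$ lies in an open unit disk $D$, and $\gamma[a,b]\cap\overline{\ball{x_n}{r_n}}=\{a,b\}$ because the open arc $\gamma(a,b)$ contains no point of $P_n$ (so it misses $\partial\ball{x_n}{r_n}$) and cannot enter $\ball{x_n}{r_n}\subset\Int\gamma$. Lemma~\ref{positivelemma} then ensures $\gamma$ winds positively around $\ball{x_n}{r_n}$ from $a$ to $b$, and for large $n$ we have $r_n\leq 1$, so Lemma~\ref{diamLemma} applies with $D_r=\ball{x_n}{r_n}$, contradicting bounded convex curvature. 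The delicate part of the argument is the short-arc step: it relies crucially on $\gamma$ being a homeomorphism from $S^1$, so that points close to $x_\infty$ in the plane correspond to parameters close to $t_0$, which then lets uniform continuity of $\gamma$ shrink the arcs between consecutive intersection points.
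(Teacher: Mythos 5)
Your overall strategy matches the paper's: pick a short arc of $\gamma$ whose endpoints lie on $\partial\ball{x_n}{r_n}$ and whose interior avoids $\overline{\ball{x_n}{r_n}}$, then invoke Lemma~\ref{positivelemma} and Lemma~\ref{diamLemma}. The paper reaches the short arc by fixing two arbitrary points $a_n,b_n\in P_n$ and a small ball $W$ around the common limit, then passing to a component of $\gamma[a_n,b_n]\setminus\partial\ball{x_n}{r_n}$; you instead look at all of $P_n$ and components of $S^1\setminus\gamma^{-1}(P_n)$. These are essentially equivalent, but you have a genuine gap in the component argument.

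The step ``Since $|P_n|\ge 2$, at least one short component exists'' is false in general. The set $\gamma^{-1}(P_n)$ is a closed subset of $S^1$, and having at least two points does not make its complement disconnected: $\gamma^{-1}(P_n)$ could be a single closed arc of positive length, in which case $S^1\setminus\gamma^{-1}(P_n)$ is exactly one open interval --- the wrap-around --- and there is no short component to pick. That situation arises precisely when $\gamma$ coincides with $\partial\ball{x_n}{r_n}$ along some arc, and nothing in your setup rules it out. The paper confronts exactly this possibility: it observes that the collection $\gamma[a_n,b_n]\setminus\partial\ball{x_n}{r_n}$ ``is nonempty as otherwise $\gamma[a_n,b_n]\subset\partial\ball{x_n}{r_n}$, and as $r_n<1$ and $\ball{x_n}{r_n}\subset\Int\gamma$, this would violate the bounded convex curvature condition.'' You need the same dichotomy: either $\gamma^{-1}(P_n)$ is disconnected, in which case your short component exists and the rest of your argument goes through, or $\gamma^{-1}(P_n)$ is a connected closed arc, in which case a point $p$ interior to that arc has $\Int\gamma$ locally equal to $\ball{x_n}{r_n}$, and no unit disk $U_p$ with $p\in\partial U_p$ can satisfy $\ball{p}{\ee_p}\cap U_p\subset\ball{x_n}{r_n}$ since $r_n<1$. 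With that case added, your proof is complete; without it, the construction of the pair $a,b$ can fail.
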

\begin{proof}
We show the contrapositive.
Suppose that no such $\eta$ exists and take a sequence of balls $\ball{x_n}{r_n}\subset\Int\gamma$ satisfying $|\gamma\cap\partial\ball{x_n}{r_n}|\geq 2$ for all $n$ and $\lim_{n\longrightarrow \infty }r_n=0$.
Further suppose that $r_n<1$ for all $n$.

For each $n$, let $a_n,b_n $ be two distinct points in $\gamma\cap\partial\ball{x_n}{r_n}$.
Since $\gamma \times \gamma$ is compact, we may assume that $(a_n,b_n)\longrightarrow (a,b)$ for some $(a,b)\in \gamma \times \gamma$ by passing to an appropriate subsequence. As $r_n \longrightarrow 0$ we must have that $a=b$.

Let $V$ be an open ball centered at $a$ of radius $1/2$. Then $V\cap \gamma$ is a collection of open intervals one of which, say $\gamma(c,d)$, contains $a$. Let $W\subset V$ be an open ball centered at $a$ and so small that $W\cap \gamma[d,c]=\emptyset$.

As $a_n,b_n \longrightarrow a$ we must have that $a_n,b_n\in W$ for $n$ sufficiently large. But then $a_n,b_n\in \gamma(c,d)$, so either $\gamma[a_n,b_n]\subset \gamma(c,d)$ or $\gamma[b_n,a_n]\subset\gamma(c,d)$. In particular, either  $\gamma[a_n,b_n]$ or $\gamma[b_n,a_n]$ is contained in an open unit disk. 

We now wish to apply Lemma~\ref{diamLemma} to show that this implies that $\gamma$ does not have bounded convex curvature. Assume without loss of generality that $n$ is such that $\gamma[a_n,b_n]$ is contained in an open unit disk. Now $\gamma[a_n,b_n]\setminus\partial \ball{x_n}{r_n}$ is a collection of open intervals of $\gamma$. Moreover, the collection is nonempty as otherwise $\gamma[a_n,b_n]\subset \partial \ball{x_n}{r_n}$, and as $r_n<1$ and $\ball{x_n}{r_n}\subset \Int \gamma$, this would violate the bounded convex curvature condition. We may thus choose distinct $a_n',b_n'$ such that $\gamma(a_n',b_n')$ is such an interval.

Since $\gamma$ winds positively around $\ball{x_n}{r_n}$ from $a_n'$ to $b_n'$ by Lemma~\ref{positivelemma}, we are in a position to apply Lemma~\ref{diamLemma} and we conclude that $\gamma$ does not have bounded convex curvature.




\end{proof}

For the proof of Theorem~\ref{MAINTHM} we will also need the following easy lemma.
\begin{lemma}\label{upperLemma}
Let $\gamma$ be a Jordan curve.
Let $r_0$ be the supremum over all $r>0$ such that $\Int \gamma$ contains an open disk of radius $r$. 
Then $\Int \gamma$ contains an open disk of radius $r_0$.
\end{lemma}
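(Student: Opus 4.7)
The plan is a straightforward compactness argument, relying only on the boundedness of $\Int\gamma$ and continuity of the Euclidean distance. First I would observe that since $\gamma$ is a Jordan curve it is compact, and hence $\Int\gamma$ is bounded. In particular, $r_0<\infty$, and we may assume $r_0>0$ (otherwise there is nothing to prove, as the supremum is taken over a nonempty set of positive reals whenever $\Int\gamma$ is nonempty and open, which is guaranteed by the Jordan curve theorem).

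Next I would pick a sequence of open disks $\ball{x_n}{r_n}\subset\Int\gamma$ with $r_n\to r_0$. Because $\Int\gamma$ is bounded and each $x_n\in\Int\gamma$, the sequence $(x_n)$ lies in a bounded subset of $\RR^2$. By the Bolzano--Weierstrass theorem, we may pass to a subsequence and assume $x_n\to x_0$ for some $x_0\in\RR^2$. The candidate disk is then $\ball{x_0}{r_0}$, and the task reduces to verifying that this open disk is contained in $\Int\gamma$.

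To verify this, I would fix an arbitrary $y\in\ball{x_0}{r_0}$, so that $\|y-x_0\|<r_0$. Setting $\delta=r_0-\|y-x_0\|>0$, for all sufficiently large $n$ we have both $\|x_n-x_0\|<\delta/2$ and $r_n>r_0-\delta/2$, so that
\[
\|y-x_n\|\leq\|y-x_0\|+\|x_0-x_n\|<(r_0-\delta)+\delta/2=r_0-\delta/2<r_n.
\]
Hence $y\in\ball{x_n}{r_n}\subset\Int\gamma$ for all sufficiently large $n$, and so $y\in\Int\gamma$. Since $y$ was arbitrary, $\ball{x_0}{r_0}\subset\Int\gamma$, completing the proof.

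There is no real obstacle; the only subtle point is to make sure we land inside the open disk $\ball{x_0}{r_0}$ rather than its closure, which is why the argument is written with a strict slack $\delta>0$ so that the triangle inequality yields a strict inequality $\|y-x_n\|<r_n$.
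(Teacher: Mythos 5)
Your proof is correct, and it differs in form from the paper's. You use sequential compactness: pick disks $\ball{x_n}{r_n}\subset\Int\gamma$ with $r_n\to r_0$, extract a convergent subsequence of centers $x_n\to x_0$ via Bolzano--Weierstrass, and then verify directly with the triangle inequality that $\ball{x_0}{r_0}\subset\Int\gamma$. The paper instead defines the inradius function $f(x)=\max\{r\geq 0:\ball{x}{r}\subset\Int\gamma\}$ on the compact set $\overline{\Int\gamma}$, observes that $f$ is $1$-Lipschitz (hence continuous), and invokes the extreme value theorem to conclude that $f$ attains its supremum $r_0$ at some $x_0$. Both are "standard compactness arguments," as the paper itself notes; yours is slightly more hands-on and self-contained, while the paper's formulation isolates the Lipschitz property of the inradius function as the structural reason the extremum is attained. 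Either is perfectly adequate for the role this lemma plays, and your handling of the strict-versus-nonstrict inequality (introducing the slack $\delta$) is exactly the care needed to land inside the open disk.
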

\begin{proof}
The proof is a standard compactness argument, using only that $\Int \gamma$ is a bounded open set. To be precise, let $f: \overline{\Int \gamma}  \longrightarrow \mathbb{R}_{\geq 0}$ be defined by 
\begin{align*}
f(x)=\sup \{r\geq 0: \ball{x}{r}\subset \Int \gamma\}.
\end{align*}
If we put $r':=f(x)$, then clearly $\ball{x}{r'}\subset \Int \gamma$, so we may in fact write $f(x)=\max \{r\geq 0: \ball{x}{r}\subset \Int \gamma\}$. 
Now, $|f(x)-f(y)|\leq \|x-y\|$ for any $x,y\in \overline{\Int \gamma}$ and thus $f$ is continuous.
Furthermore, $\sup \{f(x): x\in \overline{\Int \gamma}\}=r_0$, and since $\overline{\Int \gamma}$ is compact, $f$ attains this maximum at some point $x_0$.
But then $\ball{x_0}{r_0}\subset \Int \gamma$.
\end{proof}

We are now ready to prove Theorem~\ref{MAINTHM}.
\begin{proof}[Proof of Theorem~\ref{MAINTHM}]
Let $\gamma$ be a curve of bounded convex curvature and assume for contradiction that $\Int \gamma$ contains no open unit disk.

By Lemma~\ref{lowerLemma} and Lemma~\ref{upperLemma}, we may choose $\eta_1$ and $\eta_2$ with $0<\eta_1\leq 1-\eta_2<1$, such that any disk $D\subset \Int \gamma$ with $|\gamma\cap\partial D|\geq 2$ satisfies $\eta_1\leq \rad D\leq 1-\eta_2$.

Let $z$ be any point of $\gamma$ and let the disk $D_0\subset \Int \gamma$ be tangent to $U_{z}$ in $z$ and of maximal radius.
We note that $\gamma\cap\partial D_0$, apart from $z$, contains at least one other point. Otherwise, $\dist(\gamma\setminus B_{\ee_{z}}(z), D_0)>0$ and then we can enlarge $D_0$, contradicting the maximality of $D_0$.
Thus $\eta_1\leq \rad D_0\leq 1-\eta_2$.
The set $\gamma\setminus \partial D_{0}$ consists of some (at least two) open intervals of $\gamma$.
Let $x_0,y_0$ be distinct points on $\gamma$ such that $\gamma(x_0,y_0)$ is such an open interval.

In general for $n\geq 0$, we will recursively define distinct points $x_n,y_n\in \gamma$, and an open disk $D_n\subset \Int \gamma$ such that $\gamma[x_n,y_n]\cap \partial D_{n}=\{x_n,y_n\}$. Letting $A_n$ be the open region bounded by the Jordan curve $\gamma[x_n,y_n]\cup \partial D_n[x_n,y_n]$, the construction satisfies, for all $n\geq 0$, that
\begin{enumerate}[label=(\roman*)]
\item  \label{step1} $A_{n+1}\subset A_n$, and
\item \label{step2}  $A_n \setminus A_{n+1}$ contains an open disk $E_{n+1}$ of radius at least $\eta:=\min(\eta_1,\eta_2/2)$.
\end{enumerate}
The disks $(E_n)_{n>0}$ are pairwise disjoint and all contained in $\Int \gamma$, and moreover they have radius at least $\eta>0$. As $\Int \gamma$ is bounded, this gives the desired contradiction, thus completing the proof of the theorem. 


\begin{figure}
\centering
\includegraphics{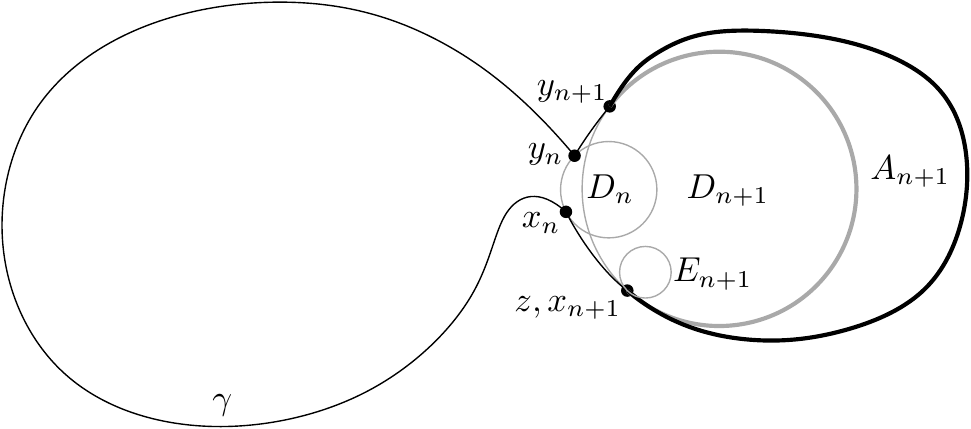}
\caption{The construction in the proof of Theorem~\ref{MAINTHM}, where $\gamma$ is the black Jordan curve.
The region $A_{n+1}$ is bounded by the fat Jordan curve.
The small disk $E_{n+1}$ is contained in $A_n$ (not excplicitly shown), but disjoint from $A_{n+1}$.}
\label{fig:thm}
\end{figure}

We have already constructed $x_0,y_0$ and $D_0$. We now describe the construction of $x_{n+1},y_{n+1}$, and $D_{n+1}$ given $x_n$, $y_n$, and $D_n$, and then argue that with this construction,~\ref{step1} and~\ref{step2} above are satisfied. Figure~\ref{fig:thm} illustrates the construction. 
First of all, $\gamma[x_n,y_n]$ winds positively around $D_n$ from $x_n$ to $y_n$ by Lemma~\ref{positivelemma}, so we may apply Lemma~\ref{diamLemma} and conclude that no open unit disk contains $\gamma[x_n,y_n]$. In particular, this applies to the open unit disk having the same center as $D_n$, and as the radius of $D_n$ is at most $1-\eta_2$, there exists a point $z\in \gamma(x_n,y_n)$ with $\dist(z,D_n)\geq \eta_2$.

Consider now the Jordan curve 
$$
\gamma_1:=\gamma[x_n,y_n]\cup \partial D_n[y_n,x_n]
$$
which, by Lemma~\ref{positivelemma}, contains $D_n$. We let $D_{n+1}$ be the open disk of maximal radius contained in $\Int \gamma_1$ and tangent to $U_{z}$ in $z$. 

By the same reasoning that we used to argue about $\partial D_0$ above, we must have that $\partial D_{n+1}$ meets $\gamma_1$ in at least two points.
None of these points can be in $\partial D_n(y_n,x_n)$ since this would imply that $D_{n+1}\subset D_n$ and hence that $z\in \overline{D_n}$, a contradiction. It follows that $|\gamma[x_n,y_n]\cap\partial D_{n+1}|\geq 2$. The set $\gamma\setminus \partial D_{n+1}$ is a collection of open intervals of $\gamma$, and since $|\gamma[x_n,y_n]\cap\partial D_{n+1}|\geq 2$, at least one of them, call it $\gamma(x_{n+1},y_{n+1})$, is contained in $\gamma(x_n,y_n)$. This completes the construction of $x_{n+1},y_{n+1}$, and $D_{n+1}$.


It remains to argue that with this construction, the conditions~\ref{step1} and~\ref{step2} are satisfied.

\begin{enumerate}[label=(\roman*)]
\item We make use of the following claim. 
\begin{claim}\label{claimthm1} 
We have that $\partial D_{n}[x_{n},y_{n}] \cap\partial D_{n+1}(x_{n+1},y_{n+1})=\emptyset$.
\end{claim}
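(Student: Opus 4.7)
The plan is to argue by contradiction: suppose $p \in \partial D_n[x_n,y_n] \cap \partial D_{n+1}(x_{n+1},y_{n+1})$, and derive a contradiction by carefully analyzing where the two circles $\partial D_n$ and $\partial D_{n+1}$ can meet.

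I would first pin down the set $Q := \partial D_n \cap \partial D_{n+1}$. Since $z \in \partial D_{n+1}$ but $z \notin \overline{D_n}$, the two circles are distinct, so $|Q| \leq 2$. Moreover, the main text already argued that $\partial D_{n+1}$ cannot meet the open arc $\partial D_n(y_n,x_n) \subset \gamma_1$: such a meeting point is an internal tangency forcing $D_{n+1} \subset D_n$ (because $\partial D_n(y_n,x_n) \subset \partial(\Int\gamma_1)$ with $\Int\gamma_1$ lying on the $D_n$ side of $\partial D_n$ there), hence $z \in \overline{D_n}$, contradicting $\dist(z,D_n) \geq \eta_2$. Combined with $\gamma[x_n,y_n]\cap\partial D_n=\{x_n,y_n\}$, this gives $Q \subseteq \partial D_n[x_n,y_n]$, so our putative $p$ lies in $Q \cap \partial D_{n+1}(x_{n+1},y_{n+1})$.

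The crux of the argument is to identify $\partial D_{n+1}[x_{n+1},y_{n+1}]$ and show it avoids $Q$. I would apply Lemma~\ref{positivelemma} to $D_{n+1}$ with the pair $(x_{n+1},y_{n+1})$; the hypothesis $\gamma[x_{n+1},y_{n+1}] \cap \overline{D_{n+1}} = \{x_{n+1},y_{n+1}\}$ holds because $\gamma(x_{n+1},y_{n+1})$ misses $\partial D_{n+1}$ by construction and misses $D_{n+1}\subset\Int\gamma$ (since $\gamma\cap\Int\gamma=\emptyset$). The lemma then guarantees that $A_{n+1}$ does not contain $D_{n+1}$. In the generic transverse case $|Q|=2$, the two points of $Q$ split $\partial D_{n+1}$ into a sub-arc inside $\overline{D_n}$ and a sub-arc outside $D_n$; since $x_{n+1},y_{n+1}\in \gamma \setminus \overline{D_n}$, both endpoints lie on the outside sub-arc. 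Thus the two arcs of $\partial D_{n+1}$ between $x_{n+1}$ and $y_{n+1}$ are the ``direct'' sub-arc (contained in the outside sub-arc, avoiding $Q$) and the ``long'' sub-arc (traversing $Q$ and dipping through $\overline{D_n}$). A straightforward geometric check shows that the long sub-arc together with $\gamma[x_{n+1},y_{n+1}]$ encloses $D_{n+1}$; hence Lemma~\ref{positivelemma} forces $\partial D_{n+1}[x_{n+1},y_{n+1}]$ to be the direct sub-arc. Since the direct sub-arc is disjoint from $\partial D_n$, we obtain the desired contradiction with $p\in\partial D_{n+1}(x_{n+1},y_{n+1})\cap\partial D_n$.

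The main obstacle is handling the degeneracies: $|Q|=1$ (tangential intersection), coincidences of $x_{n+1}$ or $y_{n+1}$ with a point of $Q$, and the boundary case $p\in\{x_n,y_n\}$. Each of these requires a slight variant of the above argument, typically invoking Lemma~\ref{leftrightlemma} at the coincidence point to compare local orientations of $\partial D_n$ and $\partial D_{n+1}$; however, the essential geometric content — that the ``correct'' arc $\partial D_{n+1}[x_{n+1},y_{n+1}]$ is the one staying outside $\overline{D_n}$ — is captured by the generic transverse case above.
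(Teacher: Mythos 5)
Your approach is genuinely different from the paper's, and it has a real gap at its heart. The paper's argument is a short counting argument: set $\gamma_2=\gamma[x_{n+1},y_{n+1}]\cup\partial D_{n+1}[y_{n+1},x_{n+1}]$, note $\partial D_{n+1}(x_{n+1},y_{n+1})\subset\Int\gamma_2$ by Lemma~\ref{positivelemma}, and observe that if $\partial D_n[x_n,y_n]$ met $\partial D_{n+1}(x_{n+1},y_{n+1})$ at some point $p\in\Int\gamma_2$, then the arc $\partial D_n[x_n,y_n]$, starting and ending outside $\Int\gamma_2$ at $x_n,y_n$, would have to cross $\gamma_2$ at least twice; since $\partial D_n[x_n,y_n]$ is disjoint from $\gamma(x_{n+1},y_{n+1})\subset\gamma(x_n,y_n)$, those two crossings land on $\partial D_{n+1}[y_{n+1},x_{n+1}]$, and together with $p$ they give three distinct points of $\partial D_n\cap\partial D_{n+1}$, contradicting that two distinct circles meet at most twice.

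Your proposal instead tries to pin down $\partial D_{n+1}(x_{n+1},y_{n+1})$ as the ``direct'' arc avoiding $Q$, invoking Lemma~\ref{positivelemma} to pick the right arc. The gap is the sentence ``A straightforward geometric check shows that the long sub-arc together with $\gamma[x_{n+1},y_{n+1}]$ encloses $D_{n+1}$.'' This is not a check but the entire content of the claim, and it does not follow from what you have established: knowing that $x_{n+1},y_{n+1}$ sit on the sub-arc of $\partial D_{n+1}$ outside $\overline{D_n}$ tells you which of the two $x_{n+1}$--$y_{n+1}$ arcs is ``direct'' versus ``long,'' but it does not by itself tell you which of those two is the positively-oriented $\partial D_{n+1}[x_{n+1},y_{n+1}]$; that depends on how $\gamma[x_{n+1},y_{n+1}]$ threads between the two disks, and establishing the link would require an additional topological argument, probably of comparable difficulty to the claim itself. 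You also explicitly defer all degeneracies ($|Q|=1$, coincidence of $x_{n+1}$ or $y_{n+1}$ with a point of $Q$, $p\in\{x_n,y_n\}$); these are not obviously ``slight variants,'' and the transverse case you focus on is itself left unfinished. I would steer you toward the counting argument, which sidesteps arc identification entirely.
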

\begin{proof}[Proof of Claim~\ref{claimthm1}]
Let the Jordan curve $\gamma_2$ be defined by
\begin{align*}
\gamma_2:=\gamma[x_{n+1},y_{n+1}]\cup \partial D_{n+1}[y_{n+1},x_{n+1}]. 
\end{align*}
By Lemma~\ref{positivelemma}, $D_{n+1}\subset\Int \gamma_2$, from which it follows that $\partial D_{n+1}(x_{n+1},y_{n+1})\subset \Int \gamma_2$.
Suppose for contradiction that $\partial D_{n}[x_{n},y_{n}]\cap\partial D_{n+1}(x_{n+1},y_{n+1})\neq \emptyset$. Since  $x_n,y_n\notin \Int \gamma_2$, $\partial D_{n}[x_{n},y_{n}]$ must then intersect $\gamma_2$ at least twice. 
But 
$$
\partial D_{n}[x_{n},y_{n}]\cap \gamma(x_{n+1},y_{n+1}) \subset \partial D_{n}[x_{n},y_{n}]\cap \gamma(x_n,y_n)=\emptyset,
$$
so in fact $\partial D_{n}[x_{n},y_{n}]$ must intersect $\partial D_{n+1}[y_{n+1},x_{n+1}]$ at least twice. 
It follows that $\partial D_{n}$ intersects $\partial D_{n+1}$ at least \emph{thrice}, which is a contradiction as $D_n\neq D_{n+1}$. 
We conclude that $\partial D_{n}[x_{n},y_{n}] \cap \partial D_{n+1}(x_{n+1},y_{n+1})=\emptyset$, as desired. 
\end{proof}
The arc $\partial D_n[x_n,y_n]$ separates $\Int \gamma_1$ into two regions, namely $D_n$ and $A_n$. The claim thus gives that either $\partial D_{n+1}(x_{n+1},y_{n+1})\subset D_n$ or $\partial D_{n+1}(x_{n+1},y_{n+1})\subset A_n$.
Now observe that $x_{n+1}\in \gamma(x_n,y_n)$ or $y_{n+1}\in \gamma(x_n,y_n)$: Indeed, $z\in \gamma(x_n,y_n)\cap \overline{D_{n+1}}$ but $\gamma(x_{n+1},y_{n+1})$ contains no point of $\overline{D_{n+1}}$. 
In particular either $x_{n+1} \notin \overline{D_n}$ or $y_{n+1} \notin \overline{D_n}$ and it is therefore the case that $\partial D_{n+1}(x_{n+1},y_{n+1})\subset A_n$. Since now $\partial A_{n+1}\subset \overline{A_n}$, we get that $A_{n+1}\subset A_n$.

\item We define $E_{n+1}$ to be the disk of radius $\eta$, tangent to $U_{z}$ in $z$, and contained in $U_{z}$.
The radius of $E_{n+1}$ is at most $\eta_2/2$, and since $z\in \partial E_{n+1}$ has distance at least $\eta_2$ to $D_n$, it follows that $E_{n+1}\subset \Int \gamma_1 \setminus \overline{D_n}=A_n$.
Moreover, $E_{n+1}\cap A_{n+1}\subset D_{n+1}\cap A_{n+1}=\emptyset$, and we conclude that $E_{n+1}\subset A_n\setminus A_{n+1}$, as desired. 
\end{enumerate}
Having argued that the conditions~\ref{step1} and~\ref{step2} are satisfied, the proof is complete.
\end{proof}

\section{Open problems}
We mention here two open problems that we find interesting.

\subsection{Are curves of bounded convex curvature rectifiable?}
As mentioned in the introduction, some earlier proofs of the Pestov--Ionin theorem have used that the length of $\gamma$ is finite.
In contrast, our proof relies on $\Int \gamma$ having finite area which is an immediate property of Jordan domains.
It is, however, easy to verify that \emph{if} curves of bounded convex curvature are rectifiable, i.e., has finite length, then the proof given by Pestov and Ionin~\cite{pestov1959largest} would carry through almost unchanged.
We believe this to actually be the case.
Is there a (simple) proof that curves of bounded convex curvature are rectifiable?

\subsection{What is the picture in higher dimensions?}
The Jordan-Brouwer separation theorem states that if $\gamma$ is an $n$-dimensional topological sphere in $\RR^{n+1}$, i.e., is obtained as the image of an injective continuous map $S^n\longrightarrow \RR^{n+1}$, then the complement of $\gamma$ in $\RR^{n+1}$ consists of exactly two connected components, one being bounded (the interior) and one being unbounded (the exterior). 

It is easy to generalize our notion of bounded convex curvature to this setting.
We say that  $\gamma$ has \emph{bounded convex curvature} if for every point $x$ on $\gamma$, there is an open $(n+1)$-dimensional unit ball $U_x$ and $\ee_x>0$ such that
\begin{align}\label{bccCondGen}
x\in\partial U_x\quad\text{and}\quad \ball{x}{\ee_x}\cap U_x\subset\Int\gamma.
\end{align}
The natural question is: If $\gamma$ has bounded convex curvature, does $\Int \gamma$ contain an open $(n+1)$-dimensional unit ball? 
This turns out to be false.
Indeed, Lagunov and Fet~\cite{Lagunovsphere,Lagunovsphere2} studied connected $n$-dimensional $C^2$-hypersurfaces in $\RR^{n+1}$ having all principal curvatures $|\kappa_i|\leq 1$.
They showed, for instance, that topological $n$-spheres with these properties  all contain an $(n+1)$-ball in their interior of radius at least $r_0=\sqrt{3/2}-1 \cong 0.2246$, and that this is sharp when $n=2$.
Other relevant work was made by Lagunov~\cite{Lagunovsurf3,Lagunovsurf,Lagunovsurf2}, who showed that all compact, connected, $C^2$, $n$-dimensional hypersurfaces embedded in $\RR^{n+1}$, for which all principal curvatures $\kappa_i$ satisfy $|\kappa_i|\leq 1$, contain a ball of radius $r_1=2/\sqrt{3}-1\cong 0.155$ and that this is sharp. 

As our class of hypersurfaces of bounded convex curvature is less restricted (there is no assumption on differentiability and we make no requirement that the concave curvature be bounded) it is natural to ask whether it still holds that 
topological $n$-spheres of bounded convex curvature contain a ball of radius $r_0$ (or $r_1$ in the case of general compact, connected, $n$-dimensional hypersurfaces embedded in $\RR^{n+1}$) in their interior. Even for $n=2$ we find this an interesting question.

\subsection*{Acknowledgments}
We thank Anders Thorup for his very careful reading of the manuscript and numerous suggestions for improving the presentation, in particular by pointing out steps in our proofs that seemed intuitively clear, but in fact required detailed arguments.

A big obstacle in our work has been that many of the relevant papers are written in Russian. We thank Richard Bishop for providing us copies of his English translations of~\cite{Lagunovsphere} and~\cite{Lagunovsphere2}.
We furthermore wish to thank the teams behind \href{www.i2ocr.com} and \href{translate.google.com}, the first of which we used to convert the cyrillic script in~\cite{pestov1959largest} to machine-encoded cyrillic text and the second of which to translate the resulting text to English, together making it possible for us to understand the proof given by Pestov and Ionin.






\newpage

\appendix

\section{Appendix}~\label{sec:appen}
In this appendix we will discuss orientations of Jordan curves and eventually provide a proof of Lemma~\ref{leftrightlemma}.
For this purpose it will be necessary to view curves as continuous maps $\varphi:I \longrightarrow \RR^2$ where $I=[t_1,t_2]\subset\RR$ is a closed and bounded interval. With this notation $\varphi$ is closed if $\varphi(t_1)=\varphi(t_2)$, and simple if $\varphi$ is injective, where in the closed case we allow $\varphi(t_1)=\varphi(t_2)$. A Jordan curve $\gamma$ is the image of a simple closed curve.

The starting point will be the classic theorem by Jordan.
\begin{theorem}[Jordan curve theorem]
Let $\gamma$ be a Jordan curve. Then the complement $\RR^2\setminus \gamma$ consists of two connected components. Moreover, $\gamma$ is the boundary of each of these components. 
\end{theorem}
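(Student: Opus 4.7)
The plan is to reduce to the case of polygonal Jordan curves, where the statement admits an elementary parity argument, and then transfer to general Jordan curves by uniform approximation.

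First I would prove the theorem for a simple closed polygon $P$. For any $x\in\RR^2\setminus P$, choose a ray $\ell_x$ emanating from $x$ in a generic direction avoiding the vertices of $P$, and define a parity $\pi(x)\in\{0,1\}$ as the number of edges of $P$ crossed by $\ell_x$ modulo $2$. A standard computation shows that $\pi$ is independent of the chosen direction (using a small perturbation to handle vertex crossings), locally constant on $\RR^2\setminus P$, and takes both values: far from $P$ the parity is $0$, while a short segment transverse to an edge changes the parity. This yields at least two components of $\RR^2\setminus P$; a direct construction of polygonal paths running parallel to $P$ on each of its two sides then shows there are exactly two, one bounded and one unbounded, and that every point of $P$ is accumulated by points of both.

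For a general Jordan curve $\gamma$ with parametrization $\pp\colon S^1\to\RR^2$, I would approximate $\pp$ uniformly by simple polygonal parametrizations $\pp_n$ with images $P_n$, and let $I_n,E_n$ denote the polygonal interior and exterior provided by the previous step. Then I would define $E$ to be the unbounded connected component of $\RR^2\setminus\gamma$, set $I=\RR^2\setminus(\gamma\cup E)$, and verify in turn that (i) $I$ is open: for $x$ with $r:=\dist(x,\gamma)>0$, once the Hausdorff distance between $\gamma$ and $P_n$ is less than $r/2$, the ball $\ball{x}{r/2}$ misses $P_n$ and lies in a single polygonal component, which then coincides with a component of $\RR^2\setminus\gamma$ near $x$; (ii) $I$ is connected, by lifting a polygonal path inside $I_n$ between two given points of $I$ to a path in $I$, using that such a path is bounded away from $\gamma$ for $n$ large; (iii) $\partial I=\partial E=\gamma$, by combining the corresponding polygonal statement with the Hausdorff convergence $P_n\to\gamma$.

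The main obstacle is controlling the approximation near the curve: the regions $I_n$ could a priori develop arbitrarily thin tendrils that do not survive in the limit, or components of $E_n$ could be pinched off. The quantitative buffer $\dist(x,\gamma)>0$ used above is precisely what prevents this, since any point off the curve is eventually separated from $P_n$ by a definite distance, forcing the ``side'' of $P_n$ on which $x$ lies to stabilize. A subtler point that also requires care is the construction of the polygonal approximants $\pp_n$ that are themselves \emph{simple}: this can be arranged by interpolating $\pp$ at a sufficiently fine partition of $S^1$ and invoking uniform continuity of $\pp$ together with the positive minimum distance between any two disjoint compact subarcs of $\gamma$.
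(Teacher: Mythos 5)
The paper does not actually prove this statement: it is quoted as the classical Jordan curve theorem and used as a black box (``the starting point will be the classic theorem by Jordan''), so there is no internal proof to compare yours against. Your polygonal-approximation strategy is a legitimate and recognized route (it is essentially Tverberg's 1980 proof), and the polygonal case via the parity of ray crossings is fine as a sketch.

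However, the passage from polygons to general curves has genuine gaps. Most seriously, you never establish that $I$ is nonempty, i.e.\ that $\RR^2\setminus\gamma$ actually has a bounded component --- this is the heart of the theorem, and your steps (i)--(iii) are consistent with $I=\emptyset$ except for (iii), whose proof you do not supply. The standard repair requires a quantitative lemma that each polygonal interior $I_n$ contains a disk of radius bounded below independently of $n$, together with an argument that the center of such a disk cannot be joined to infinity without meeting $\gamma$; this is nontrivial. Second, your connectivity argument (ii) does not work as stated: a path in $I_n$ joining two points of $I$ need not avoid $\gamma$, since $I_n$ may well intersect $\gamma$ (the polygon only approximates the curve), and such a path need not stay at a distance bounded away from $\gamma$ --- it may have to squeeze through places where $\gamma$ nearly self-touches. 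You would also need to rule out that the two points lie in different components of $\RR^2\setminus P_n$. Finally, the existence of \emph{simple} inscribed polygonal approximants is not a consequence of uniform continuity alone: the naive inscribed polygon can self-intersect, and the minimum distance between the disjoint subarcs occurring in a given partition can shrink faster than the mesh as the partition is refined, so the bound you invoke is not uniform; Tverberg devotes a separate lemma to repairing exactly this. None of these gaps is fatal to the strategy, but as written the proof is incomplete precisely at the points where the theorem is hard.
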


We now recall some simple facts concerning argument variation.
For a given point $p\in \RR^2$ and $x\neq p$ an \emph{argument} for $x$ with respect to $p$ is an argument for the vector $x-p$, that is, an angle $\theta\in\RR$ such that $x-p=(r \cos\theta,r \sin\theta)$ for some $r>0$.
If $I=[t_1,t_2]$, $\varphi:I\longrightarrow \RR^2$ is a curve, and $p\notin  \varphi(I)$, then a \emph{continuous argument function} for $\varphi$ with respect to $p$ is a continuous map $\theta:I \longrightarrow \RR^2$ such that $\theta(t)$ is an argument for $\varphi(t)$ for all $t\in I$.
The \emph{argument variation} around $p$ is defined as $\Arg_p \varphi=\theta(t_2)-\theta(t_1)$ and this does not depend on the choice of $\theta$, nor is it changed if we use an orientation preserving reparametrization of $\varphi$. Importantly, the function $p \longmapsto \Arg_p \varphi$ is continuous on $\RR^2 \setminus \varphi(I)$.

From the above it follows that for a Jordan curve $\gamma$, the argument variation around any $p\notin\gamma$ is a multiple of $2\pi$,  constant on each of the two connected components of the complement of $\gamma$, and $0$ on the unbounded component.
In fact, the argument variation is $\pm 2 \pi$ when $p\in \Int \gamma$, as we will see shortly.
We say that a parametrization $\varphi$ of $\gamma$ is \emph{positively oriented} if the argument variation of $\varphi$ around any $p\in\Int\gamma$ is $2\pi$.
Otherwise we say that $\varphi$ is \emph{negatively oriented}.
If $a$ and $b$ are distinct points on $\gamma$, we write $\gamma[a,b]$ for the interval of $\gamma$ obtained by traversing $\gamma$ from $a$ to $b$ along the positive orientation.
Slightly abusing notation we will sometimes write $\gamma[a,b]$ for a parametrization of this interval traversed from $a$ to $b$.
We also define $\gamma(a,b)=\gamma[a,b]\setminus\{a,b\}$.

Finally, if $\varphi_1:[s_1,s_2]\longrightarrow \RR^2$ and $\varphi_2:[t_1,t_2]\longrightarrow \RR^2$ are curves satisfying $\varphi_1(s_2)=\varphi_2(t_1)$, we let $\varphi_1 +\varphi_2$ be the continuous curve obtained by first traversing $\varphi_1$ and then $\varphi_2$.
Also, if $\varphi$ is a curve, we write $-\varphi$ for the curve obtained by traversing $\varphi$ in the opposite direction.
We finally write $\varphi_1-\varphi_2=\varphi_1+(-\varphi_2)$ when the addition is well-defined.
We now restate Lemma~\ref{leftrightlemma} in a more general form.

\begin{figure}
\centering
\includegraphics{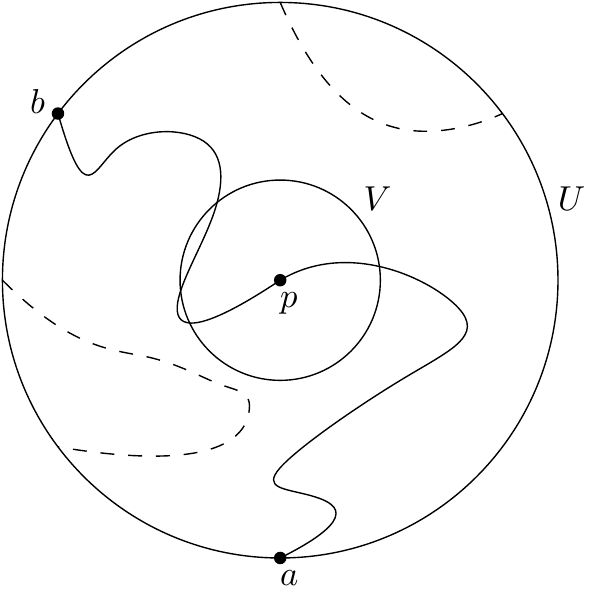}
\caption{The setting of Lemma~\ref{leftrightlemma2}. The dashed lines represent $\gamma$ potentially reentering $U$.}
\label{fig:leftright}
\end{figure}

\begin{lemma}\label{leftrightlemma2}
Let $p$ be a point on a Jordan curve $\gamma$, and let $U$ be an open disk with center $p$, sufficiently small so that $\gamma$ is not contained in $U$.
The intersection of $U$ and $\gamma$ is a collection of open intervals of $\gamma$ of which one, say $\gamma(a,b)$, contains $p$.
Consider the two Jordan curves
$$
\alpha^+=\gamma[a,b]+ \partial U[b,a] \quad \text{and} \quad \alpha^-=\gamma[a,b]- \partial U[a,b].
$$
Then $U$ is the disjoint union
$$
U=\gamma(a,b) \cup \Int \alpha^+ \cup \Int \alpha^-.
$$
Moreover, $\Int \gamma$ and $\Int \alpha^+$ coincide near $p$, that is, there exists a small disk $V\subset U$ centered at $p$ such that $\Int \gamma \cap V=\Int \alpha^+ \cap V$. (See Figure~\ref{fig:leftright}.)
\end{lemma}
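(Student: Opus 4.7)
The plan is to leverage additivity of the argument variation, $\Arg_q(\varphi_1+\varphi_2)=\Arg_q\varphi_1+\Arg_q\varphi_2$, together with the fact that for any Jordan curve $\eta$ the value $\Arg_q\eta$ for $q\notin\eta$ is $0$ on $\Ext\eta$ and $\pm 2\pi$ on $\Int\eta$, with the sign encoding the orientation. First I would verify that $\alpha^+$ and $\alpha^-$ are Jordan curves: closedness follows from matching endpoints, and simplicity from the fact that $\gamma(a,b)\subset U$ is disjoint from $\partial U$, so $\gamma[a,b]$ meets $\partial U$ only at $\{a,b\}$. I would also note that $\alpha^\pm\subset\overline U$, so the unbounded region $\RR^2\setminus\overline U$ lies in $\Ext\alpha^\pm$, and since every point of $\partial U\setminus\alpha^\pm$ is a limit of such points, we in fact have $\Int\alpha^\pm\subset U$.

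The key identity for the decomposition $U=\gamma(a,b)\cup\Int\alpha^+\cup\Int\alpha^-$ is
$$
\Arg_q\alpha^+-\Arg_q\alpha^-=\Arg_q\partial U[b,a]+\Arg_q\partial U[a,b]=\Arg_q\partial U,
$$
which equals $2\pi$ for any $q\in U$ after fixing the positive orientation on $\partial U$. Since $\Arg_q\alpha^\pm\in\{0,\pm 2\pi\}$, the only solutions are $(\Arg_q\alpha^+,\Arg_q\alpha^-)\in\{(2\pi,0),(0,-2\pi)\}$. As a byproduct this shows that $\alpha^+$ is positively and $\alpha^-$ negatively oriented as Jordan curves. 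Combined with $\alpha^\pm\cap U=\gamma(a,b)$ (the $\partial U$-parts avoid $U$), every $q\in U\setminus\gamma(a,b)$ lies in exactly one of $\Int\alpha^+$ or $\Int\alpha^-$, establishing the disjoint union.

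For the local identification $\Int\gamma\cap V=\Int\alpha^+\cap V$, I would choose $V\subset U$ an open disk centered at $p$ disjoint from the closed set $\gamma[b,a]\cup\partial U$, which does not contain $p$ (since $\gamma(a,b)$ is open in $\gamma$ and contains $p$). Then the closed curve $\delta=\gamma[b,a]-\partial U[b,a]$ does not meet $V$, so $q\mapsto\Arg_q\delta$ is constant on the connected set $V$; call its value $c$. Additivity applied to the concatenations $\gamma=\gamma[a,b]+\gamma[b,a]$ and $\alpha^+=\gamma[a,b]+\partial U[b,a]$ yields $\Arg_q\gamma=\Arg_q\alpha^++c$ for every $q\in V\setminus\gamma$. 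Since $p\in\alpha^+$, the Jordan curve theorem guarantees that $V$ meets both $\Int\alpha^+$ and $\Ext\alpha^+$, and by the decomposition of the previous step the second intersection lies in $\Int\alpha^-$. Evaluating the identity at $q_1\in V\cap\Int\alpha^+$ (where $\Arg_{q_1}\alpha^+=2\pi$) forces $c\in\{-2\pi,0\}$, and at $q_2\in V\cap\Int\alpha^-$ (where $\Arg_{q_2}\alpha^+=0$) forces $c\in\{0,2\pi\}$. Hence $c=0$, $\Arg_q\gamma=\Arg_q\alpha^+$ throughout $V\setminus\gamma$, and $\Int\gamma\cap V=\Int\alpha^+\cap V$.

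The main obstacle I expect is this last step. The naive instinct is to evaluate $c$ directly as a winding number by invoking the Jordan curve theorem for $\delta$, but $\delta$ is typically not a Jordan curve: the part $\gamma[b,a]$ may re-enter $U$ and cross $\partial U[a,b]$ arbitrarily often, which is precisely what the dashed lines in Figure~\ref{fig:leftright} depict. The argument above bypasses any direct analysis of $\delta$, instead deducing $c=0$ from the two compatible constraints imposed by points of $V$ on the two sides of the local arc.
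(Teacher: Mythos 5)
Your proof is correct and follows essentially the same route as the paper's---winding\hyphenation{number} computations via additivity of $\Arg$. The decomposition is obtained from the same underlying identity that the paper's (1)--(2) encode (subtracting them gives exactly your $\Arg_q\alpha^+-\Arg_q\alpha^-=\Arg_q\partial U=2\pi$), and your local identification via the constant $c=\Arg_q\delta$ plays the role of the paper's $v_x+\theta_x$ being locally constant.

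There are two genuine differences, both worth noting. First, you take as given the classical fact that $\Arg_q\eta=\pm 2\pi$ for $q\in\Int\eta$ of any Jordan curve $\eta$; the paper deliberately avoids assuming this and instead obtains it as a byproduct of equations (5)--(6) (for $\alpha^\pm$) and of the final step (for $\gamma$). That is the reason the paper introduces the auxiliary function $\theta_x$ and argues via ``the left side vanishes'' rather than simply constraining pairs $(\Arg_q\alpha^+,\Arg_q\alpha^-)$. Your shortcut is legitimate since the fact is standard, and it does make the decomposition argument noticeably cleaner and also immediately yields the orientations $\ee^+=+1,\ee^-=-1$. Second, in the step ``forces $c\in\{-2\pi,0\}$'' you implicitly use that $\gamma$ is positively oriented (so that $\Arg_{q}\gamma\in\{0,2\pi\}$ rather than $\{0,-2\pi\}$); with the opposite convention the intersection would give $c=-2\pi$ and the conclusion would read $\Int\gamma\cap V=\Int\alpha^-\cap V$. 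This is exactly the orientation ambiguity the paper flags in its Remark and resolves by fixing $a,b$ to be compatible with the positive orientation only after establishing the $\pm 2\pi$ fact. Since you assume the $\pm 2\pi$ fact up front, positive orientation is already well-defined and the convention $\gamma[a,b]$ entails it, so there is no error---but the dependence deserves to be stated. Your observation that the naive approach via the Jordan curve theorem for $\delta$ fails because $\delta$ need not be simple, and that one should instead pin down $c$ from two sides of the local arc, is exactly the right insight and mirrors what the paper does with $\Int\alpha^+\cap V$ and $\Int\alpha^-\cap V$.
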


\paragraph{Remark.}
We prove the lemma by first proving the statement about the decomposition of $U$.
Strictly speaking, $\gamma(a,b)$ is not defined at this point and $a$ and $b$ may a priori be chosen in two different ways.
Likewise, $\partial U[b,a]$ and $\partial U[a,b]$ are not defined, but it is obvious what it means to traverse a circle in the positive and negative direction.
The statement about the decomposition of $U$ is correct regardless of how $a$ and $b$ are chosen, so the ambiguity does not matter for that part of the lemma.
The second part of the proof starts by showing that the argument variation around any $q\in \Int \gamma$ is $\pm 2 \pi$, which by the introductory comments lets us define intervals such as $\gamma[a,b]$ unambiguously.
This is important for the statement that $\Int\gamma$ and $\Int \alpha^+$ coincide near $p$, which is finally proven.

\begin{proof}[Proof of Lemma~\ref{leftrightlemma2}]
Define $\delta= \partial U$. For any point $x\in U$ let $\theta_x\in (0,2\pi)$ be the angle from $a-x$ to $b-x$ in the positive direction. Clearly, $\theta_x=\Arg_x \delta[a,b]$ and $\Arg_x \delta[b,a]=2 \pi-\theta_x$. Hence, if $x$ is in $U$ and not on $\gamma(a,b)$ we have the two equations
\begin{align}
\Arg_x \alpha^+&=\Arg_x \gamma[a,b]+2\pi-\theta_x, \label{eq1}\\
 \Arg_x \alpha^{-}&=\Arg_x \gamma[a,b]-\theta_x. \label{eq2}
\end{align}
If $x\in \Ext \alpha^+$ then the left side of~\eqref{eq1} vanishes, and if $x \in \Ext \alpha^-$ then the left side of~\eqref{eq2} vanishes. Consequently, by~\eqref{eq1} and~\eqref{eq2},
\begin{align}
\Arg_x \gamma[a,b]&=\theta_x-2 \pi &\text{for } x\in \Ext \alpha^+ \cap U, \label{eq3}\\
\Arg_x \gamma[a,b]&=\theta_x &\text{for } x\in \Ext \alpha^- \cap U. \label{eq4}
\end{align}
In turn, when the latter two equations are inserted into~\eqref{eq1} we obtain the equations
\begin{align}
\Arg_x \alpha^+&=2 \pi &\text{for } x\in \Ext \alpha^- \cap U,\label{eq5} \\
\Arg_x \alpha^{-}&=-2 \pi &\text{for } x\in \Ext \alpha^+ \cap U. \label{eq6}
\end{align}
We now observe that $\Ext \alpha^- \cap U$ is nonempty, as follows.
Let $q$ be a point on $\delta(b,a)$ and $W$ an open disk centered at $q$ so small that $W\cap \alpha^{-}=\emptyset$.
Since a part of $W$ is outside $U$ (and thus in $\Ext\alpha^-$), it follows that $W\subset \Ext \alpha^-$. 
Hence $W\cap U\subset \Ext \alpha^{-} \cap U$ and since $W\cap U$ is nonempty, the claim follows.
Let $y\in\Ext \alpha^- \cap U$.

We next prove that $\Ext \alpha^- \cap U = \Int \alpha^+$.
Consider a point $x\in \Ext \alpha^- \cap U$.
It follows from~\eqref{eq5} that $\Arg_x \alpha^+=2 \pi$ and hence that $x\in\Int \alpha^+$. 
On the other hand, consider a point $x\in \Int \alpha^+$.
Since $\Arg_y \alpha^+=2\pi$ and $z\longmapsto \Arg_z \alpha^+$ is constant on $\Int \alpha^+$, it follows that $\Arg_x \alpha^+=2 \pi$.
We now get from~\eqref{eq1} that $\Arg_x \gamma[a,b]=\theta_x$, and then by~\eqref{eq2} that $\Arg_x \alpha^-=0$, that is; $x \in \Ext \alpha^-$.
We conclude that $\Ext \alpha^- \cap U = \Int \alpha^+$, as claimed.
We can in a similar way show that $\Ext \alpha^+ \cap U = \Int \alpha^-$.
The assertion in the lemma concerning the decomposition of $U$ then follows.

To prove the final assertion, choose an open disk $W \subset U$ centered at $p$ such that $W\cap \gamma[b,a]=\emptyset$. 
For any point $x$ not on $\gamma[b,a]$, let $v_x:= \Arg_x \gamma[b,a]$.
When $x\notin\gamma$, we have $\Arg_x \gamma =v_x+\Arg_x \gamma [a,b]$.
Hence, by~\eqref{eq3} and~\eqref{eq4}
\begin{align}
\Arg_x \gamma&=v_x+\theta_x-2\pi &\text{for } x\in \Int \alpha^- \cap W&, \label{eq7} \\
\Arg_x \gamma&=v_x+\theta_x &\text{for } x\in \Int \alpha^+ \cap W& \label{eq8}.
\end{align}
Note that $x\longmapsto v_x$ and $x\longmapsto \theta_x$ are defined and continuous on all of $W$.
Choose the open disk $V\subset W$ centered at $p$ such that $|(v_x+\theta_x)-(v_p+\theta_p)|<\pi$ for $x\in V$, so that every value of~\eqref{eq7} is strictly smaller than every value of~\eqref{eq8}.
The argument variation $\Arg_x \gamma$ is constant on the two connected components of $\RR^2 \setminus \gamma$, so it can take two possible values.
Hence, the values of~\eqref{eq7} and~\eqref{eq8} must be constant, and with $A:=v_p+\theta_p$ it follows that $v_x+\theta_x=A$ for $x\in V$ and that
\begin{align*}
\Arg_x \gamma&=A-2\pi &\text{for } x\in \Int \alpha^- \cap V&, \\
\Arg_x \gamma&=A &\text{for } x\in \Int \alpha^+ \cap V&.
\end{align*}
The second value is $2\pi$ larger than the first, and, a priori, one of the values is $0$.
Hence, either the values are $-2\pi$ and $0$ or they are $0$ and $2\pi$.
We now say that $\gamma$ is \emph{positively oriented} if the values are $0$ and $2\pi$ and \emph{negatively oriented} in the other case.
Now assume that $a$ and $b$ are chosen such that $\gamma[a,b]$ is the interval obtained by traversing $\gamma$ from $a$ to $b$ along the positive orientation. Then it follows that $\Arg_x \gamma=2\pi$ for all $x\in \Int \gamma$ and that $\Int \gamma \cap V=\Int \alpha^+\cap V$, as asserted.

\end{proof}

\end{document}